\newtheorem{theorem}{Theorem}[section]
\newtheorem{lemma}[theorem]{Lemma}
\newtheorem{proposition}[theorem]{Proposition}
\newtheorem{corollary}[theorem]{Corollary}
\newenvironment{proof}[1][Proof]{\begin{trivlist}
\item[\hskip \labelsep {\bfseries #1}]}{\end{trivlist}}
\newenvironment{definition}[1][Definition]{\begin{trivlist}
\item[\hskip \labelsep {\bfseries #1}]}{\end{trivlist}}
\newenvironment{example}[1][Example]{\begin{trivlist}
\item[\hskip \labelsep {\bfseries #1}]}{\end{trivlist}}
\newenvironment{remark}[1][Remark]{\begin{trivlist}
\item[\hskip \labelsep {\bfseries #1}]}{\end{trivlist}}
\begin{document}
%
\title{Special values of Kloosterman sums
and binomial bent functions}

\author{Chunming~Tang,
 Yanfeng~Qi
\thanks{C. Tang is with School of Mathematics and Information, China West Normal University, Sichuan Nanchong, 637002, China. e-mail: tangchunmingmath@163.com
}

\thanks{Y. Qi is with LMAM, School of Mathematical Sciences, Peking University, Beijing, 100871, and Aisino corporation Inc.,  Beijing, 100097,  China
}
}


\maketitle

\begin{abstract}
Let $p\ge 7$, $q=p^m$. $K_q(a)=\sum_{x\in \mathbb{F}_{p^m}} \zeta^{\mathrm{Tr}^m_1(x^{p^m-2}+ax)}$ is the Kloosterman sum of $a$ on $\mathbb{F}_{p^m}$,
where $\zeta=e^{\frac{2\pi\sqrt{-1}}{p}}$. The value $1-\frac{2}{\zeta+\zeta^{-1}}$
of $K_q(a)$ and its conjugate have close relationship with a class of binomial function
with Dillon exponent. This paper first presents some necessary conditions for $a$ such that
$K_q(a)=1-\frac{2}{\zeta+\zeta^{-1}}$. Further, we prove that if $p=11$, for any $a$,
$K_q(a)\neq 1-\frac{2}{\zeta+\zeta^{-1}}$. And for $p\ge 13$, if $a\in \mathbb{F}_{p^s}$ and $s=\mathrm{gcd}(2,m)$, $K_q(a)\neq 1-\frac{2}{\zeta+\zeta^{-1}}$.
In application, these results explains some class of binomial regular bent functions does not exits.
\end{abstract}

\begin{IEEEkeywords}
Regular bent function, Walsh transform, Kloosterman sums, $\pi$-adic expansion,
cyclotomic fields
\end{IEEEkeywords}

%
\IEEEpeerreviewmaketitle

\section{Introduction}

Let $q=p^m$, where $p$ be a prime and $m$ be a positive integer.
Let $\mathbb{F}_q$ be a finite field with $q$ elements.
Let $\mathrm{Tr}^m_1$ be the trace function from $\mathbb{F}_q$ to $\mathbb{F}_p$, i.e.,
$\mathrm{Tr}_{1}^{m}(x):= x+ x^{p}+
x^{p^{2}}+\cdots+ x^{p^{m-1}}$.  Let $\zeta=e^{\frac{2\pi\sqrt{-1}}{p}}$ be
a primitive
p-th root of unity. The Kloosterman sum of $a$ on
$\mathbb{F}_q$ is
\begin{align*}
K_q(a)=1+\sum_{x\in \mathbb{F}_q^*}\zeta^{\mathrm{Tr}^m_1(\frac{1}{x}+ax)}, a\in \mathbb{F}_q.
\end{align*}
Kloosterman sums are related to the construction of some Dillon type bent functions.

Let $n=2m$. When $p=2$, Leander \cite{Ld} proved that monomial function $\mathrm{Tr}^n_1(a x^{t(q-1)})$($a \in \mathbb{F}_{p^n}^*$, $\mathrm{gcd}(t,q+1)=1$) is bent if and only if $K_q(a^{q+1})=0$, i.e., $a^{q+1}$ is the zero point of Kloosterman sum $K_q$. Helleseth and Kholosha \cite{HK} generalized Leader's results for $p>2$ and found that the Kloosterman sum $K_q(\alpha)$ does not take the  value  zero
for $p>3$. Kononen et al. \cite{KRV} proved this fact.
When $p=2,3$, there exist many zero points of Kloosterman sums. And Moisio \cite{Ms} proved that any zero point of Kloosterman sums does not belong to a proper subfield of $\mathbb{F}_q$.

When $p\ge 3$, binomial function $\mathrm{Tr}^n_1(ax^{t(q-1)})+bx^{\frac{p^n-1}{2}}$($a\in \mathbb{F}_{p^n}^*$) is studied by Jia et al. \cite{JZH} and Zheng et al. \cite{ZYH}, where $b\in \mathbb{F}_{p}$, $\mathrm{gcd}(t,q+1)=1$. And it is bent if and only if $K_q(a)=1-\frac{2}{\zeta^b+\zeta^{-b}}$. Hence, for determining such bent functions, it is important to study  the value $1-\frac{2}{\zeta^b+\zeta^{-b}}$ of Kloosterman sums. Kononen \cite{Ms} presented a solution for $b=0$.

Divisibility results for Kloosterman sums  are vital and have many applications. On divisibility results of Kloosterman sums, many work can be found in \cite{CHZ,GF,GFMM,GFLGM,Ms24}.
Moloney \cite{Ml} analyzed divisibility results for $K_q(a)$ by $p$-adic  methods.

This paper will study the special value $1-\frac{2}{\zeta^b+\zeta^{-b}}$ of Kloosterman sums. By the $\pi$-adic expansions of $K_q(a)$ and $1-\frac{2}{\zeta^b+\zeta^{-b}}$, we
obtain some necessary conditions for $a$ satisfying $K_q(a)=1-\frac{2}{\zeta^b+\zeta^{-b}}$, where $\pi$ is a prime of local field $Q_p(\zeta)$ satisfying $\pi^{p-1}+p=0$ and $\zeta\equiv 1+\pi \mod \pi^2$. Further, we prove that if $p=11$, for any $a$,
$K_q(a)\neq 1-\frac{2}{\zeta^b+\zeta^{-b}}$. And for $p\ge 13$, if $a\in \mathbb{F}_{p^s}$ and $s=\mathrm{gcd}(2,m)$,
$K_q(a)\neq 1-\frac{2}{\zeta^b+\zeta^{-b}}$. Hence, these results explain that some class of binomial regular bent functions does not exist.

The rest of the paper is organized as follows. Section 2 introduces some background knowledge.
Section 3 presents our main results on special values of Kloosterman sums.
Section 4 gives some results on bent functions for application.
In Section 5, we make a conclusion.

\section{Preliminaries}
\subsection{Local fields and Gauss sums}

Throughout this paper, let $q=p^m$, where $p$ is a prime and $m$ is a positive integer. Let $\mathbb{F}_{q}$ be a finite field with $q$ elements.
Let $\mathbb{F}_{q}^*$ be the multiplicative group of $\mathbb{F}_{q}$. For any $a\in \mathbb{F}_{q}$, the trace function from $\mathbb{F}_{q}$ to $\mathbb{F}_{p}$ is defined by $\mathrm{Tr}^m_1(a):=a+a^{p}+\cdots+a^{p^{m-1}}$.

Let $Q_p$ be the finite field composed of all the p-adic numbers and
$Z_p$ be its integer ring. Let $\zeta=e^{\frac{2\sqrt{-1}\pi}{p}}$ be a primitive p-th root of unity. Let $\xi$ be a prmitive $(q-1)-$th root of unity. Then there exists the field extension tower $Q_p\subseteq Q_p(\zeta)\subseteq Q_p(\zeta,\xi)$.
Further, $Q_p(\zeta)$ is a totally ramified extension of degree  $p-1$  over $Q_p$, and $Q_p(\zeta,\xi)$  is a unramified extension of degree $m$ over
$Q_p(\zeta)$. Take a prime element $\pi$ in $Q_p(\zeta)$ and $Q_p(\zeta,\xi)$ such that
$$\pi^{p-1}+p=0, \zeta\equiv 1+\pi \mod \pi^2. $$

Let $\beta\equiv \xi \mod \pi$, then $\beta\in \mathbb{F}_{q}$ is a generator of $\mathbb{F}_{q}^*$.  The Teichm\"{u}ller character is defined by $$\omega(\beta^i)=\xi^i.$$

Let $\widehat{\mathbb{F}_q^*}$ be the group of multiplicative character of $\mathbb{F}_q$. Obviously, $\omega(\cdot)$ is a generator of $\widehat{\mathbb{F}_q^*}$. More results on local fields can be found in \cite{S}. Let
$$
\widehat{Tr}_{1}^{m}(a)=\sum_{i\in \{0,\ldots,m-1\}}{\omega^{p^i}(a)}.
$$
$\widehat{\mathrm{Tr}}_{1}^{m}$ is called the lifted trace. And by the modular
property of Teichm\"{u}ller character, $\widehat{\mathrm{Tr}}_{1}^{m}(a)
\equiv {\mathrm{Tr}}_{1}^{m}(a) \mod p$.

The Gauss sum of a character $w^{-j}\in \widehat{\mathbb{F}_q^*}$ is defined by
\begin{align*}
g(j)=-\sum_{x\in \mathbb{F}_q}w^{-j}(x)\zeta^{\mathrm{\mathrm{Tr}}^m_1(x)}.
\end{align*}

The following Stickelberger's theorem is helpful for the divisibility results of Gauss sums. And this theorem is a direct consequence of the Gross-Koblitz formula \cite{GK}.
\begin{theorem}\label{g(j)}
Let $1\le j<q-1$ and $j=j_0+j_1p+\cdots+j_{m-1}p^{m-1}$, then
\begin{align*}
g(j)\equiv \frac{1}{j_0!j_1!\cdots j_{m-1}!}\pi^{wt_p(j)} \mod \pi^{wt_p(j)+p-1},
\end{align*}
where $wt_p(j)=j_0+j_1+\cdots+j_{m-1}$.
\end{theorem}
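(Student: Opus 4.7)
The plan is to derive the congruence directly from the Gross-Koblitz formula, as the author indicates.

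Writing $j = j_0 + j_1 p + \cdots + j_{m-1} p^{m-1}$ with $0 \le j_i \le p-1$, I first observe that multiplication by $p^i$ modulo $q-1$ cyclically permutes the $p$-adic digits of $j$, so in particular the digit sum $wt_p(j)$ is invariant. Applied to the Gauss sum of $\omega^{-j}$ against the additive character $\zeta^{\mathrm{Tr}}$ with the sign convention fixed in the paper, the Gross-Koblitz formula yields
\[
g(j) \;=\; -\pi^{wt_p(j)} \prod_{i=0}^{m-1} \Gamma_p\!\left(\left\langle \frac{p^i j}{q-1} \right\rangle\right),
\]
where $\Gamma_p$ is Morita's $p$-adic gamma function and $\langle \cdot \rangle$ denotes the fractional part. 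The leading power $\pi^{wt_p(j)}$ is then visible on the right-hand side, and the remaining task is to evaluate the product of $\Gamma_p$-values modulo $\pi^{p-1}$.

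The second step is to approximate each Gamma factor. The cyclic-shift observation shows that the $i$-th argument, viewed in $\mathbb{Z}_p$ via its canonical representative mod $p$, is congruent to $-j_i$ (for the appropriately indexed digit). Using Morita's limit definition
\[
\Gamma_p(x) \;=\; \lim_{n \to x} (-1)^n \prod_{0 < k < n,\, p \nmid k} k
\]
together with Wilson's theorem, one obtains a congruence of the form
\[
\Gamma_p\!\left(\left\langle \frac{p^i j}{q-1} \right\rangle\right) \;\equiv\; \frac{(-1)^{j_i+1}}{j_i!} \pmod{\pi^{p-1}}.
\]
Multiplying these congruences for $i = 0,\dots,m-1$, the powers of $-1$ collect into a factor $(-1)^{wt_p(j)+m}$, and combining this with the overall $-1$ in the Gross-Koblitz formula and with the behaviour of the Teichm\"uller lift under Frobenius iteration, the signs collapse and produce exactly the claimed leading term $\pi^{wt_p(j)}/\prod_i j_i!$ modulo $\pi^{wt_p(j)+p-1}$.

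The main obstacle here is not algebraic but one of careful bookkeeping. First, one must check that the sign and additive-character normalizations used in the paper's definition of $g(j)$ truly match those of Gross-Koblitz, so that the product of $\Gamma_p$-values appears with the correct overall sign. Second, the cyclic shift in the arguments of $\Gamma_p$ must be tracked so that the factorial $j_i!$ really corresponds to the digit $j_i$ rather than to some permutation of it; here the fact that $\prod_i j_i!$ is symmetric under cyclic permutations of the digits is what makes the statement clean. Finally, the precision $\pi^{p-1}$ in the Gamma congruence is exactly what is needed to produce the stated modulus $\pi^{wt_p(j)+p-1}$, and this sharpness comes from the analyticity of $\Gamma_p$ on $\mathbb{Z}_p$, whose first correction beyond the leading factorial contributes at the $\pi^{p-1}$ level.
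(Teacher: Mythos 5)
The paper does not actually prove this statement: it only remarks that it is ``a direct consequence of the Gross--Koblitz formula'' and cites \cite{GK}. Your derivation is therefore exactly the intended route, and its skeleton is right: Gross--Koblitz gives $\sum_{x}\omega^{-j}(x)\zeta^{\mathrm{Tr}_1^m(x)}=-\pi^{wt_p(j)}\prod_{i=0}^{m-1}\Gamma_p\bigl(\bigl\langle p^i j/(q-1)\bigr\rangle\bigr)$, the arguments are cyclic digit shifts with $\bigl\langle p^i j/(q-1)\bigr\rangle\equiv -j_{i'} \pmod p$ for the appropriate digit $j_{i'}$, and the $1$-Lipschitz continuity of $\Gamma_p$ on $\mathbb{Z}_p$ is what upgrades the classical Stickelberger precision $\pi^{wt_p(j)+1}$ to the stated $\pi^{wt_p(j)+p-1}$.

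The one place your writeup does not close is the sign. With the argument congruent to $-j_i$, the correct evaluation is $\Gamma_p(-j_i)=1/j_i!$ with \emph{no} sign: starting from $\Gamma_p(0)=1$ and using the functional equation $\Gamma_p(x+1)=-x\,\Gamma_p(x)$ for $p\nmid x$, one gets $\Gamma_p(-1)=1$, $\Gamma_p(-2)=1/2!$, and inductively $\Gamma_p(-k)=1/k!$ for $0\le k\le p-1$. Then the only sign present is the leading $-1$ of Gross--Koblitz, which cancels against the $-1$ that the paper builds into its definition $g(j)=-\sum_x\omega^{-j}(x)\zeta^{\mathrm{Tr}_1^m(x)}$, giving exactly $g(j)\equiv \pi^{wt_p(j)}/\prod_i j_i!$. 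Your per-factor congruence $(-1)^{j_i+1}/j_i!$ would instead leave an overall factor $(-1)^{wt_p(j)+m}$, which is not $1$ in general and cannot be absorbed by ``the behaviour of the Teichm\"uller lift under Frobenius iteration'' --- that phrase does not correspond to any sign contribution here. So either your $\Gamma_p$-congruence or your claimed collapse of signs is wrong as stated; replacing it by $\Gamma_p(-j_i)=1/j_i!$ repairs the argument and the rest of your proof goes through.
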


\subsection{Kloosterman sums}
\begin{definition}
Let $a\in \mathbb{F}_q$. The Kloosterman sum of $a$ is
\begin{align}\label{Kq}
K_q(a)=\sum_{x\in \mathbb{F}_{q}}\zeta^{\mathrm{\mathrm{Tr}}^m_1(x^{q-2}+ax)}=1+\sum_{x\in \mathbb{F}_{q}^*}\zeta^{\mathrm{\mathrm{Tr}}^m_1(\frac{1}{x}+ax)}.
\end{align}
\end{definition}
Since $K_q(a)\in Q_p(\zeta)$, there is a unique $\pi$-adic expansion $K_q(a)=a_0+a_1\pi+\cdots+a_i \pi^i+\cdots$, where $a_i\in Q_p$ and $a_i^p=a_i$.

From the identity \cite{W}
\begin{align}\label{Kmorq}
K_q(a)\equiv \sum_{j=1}^{q-2} (g(j))^2 \omega^j(a) \mod q,
\end{align}
and Theorem \ref{g(j)}, Moloney\cite{Ml} studied the divisibility results of $K_q(a)$. In particular, Moloney considered the case for $p=2,3$ and computed coefficients $a_0,a_1,\cdots,a_7$ of the $\pi$-adic expansion of $K_q(a)$ for $p\geq 7$.

\subsection{Binomial regular bent functions}

The Walsh transform of a $p$-ary function $f(x):\mathbb{F}_{p^n}\rightarrow \mathbb{F}_p$ is defined by
\begin{align*}
W_f(\lambda)=\sum_{x\in \mathbb{F}_{p^n}}e^{\frac{2\pi \sqrt{-1}}{p}(f(x)-\mathrm{\mathrm{Tr}}^m_1(\lambda x))}.
\end{align*}
$f(x)$ is a $p$-ary bent function, if for any $\lambda \in \mathbb{F}_{p^n}$, $|W_f(\lambda)|=p^{\frac{n}{2}}$. Further, $f$ is regular if $W_f(\lambda)=p^{\frac{n}{2}}e^{\frac{2\pi\sqrt{-1}}{p}f^*(\lambda)}$, where $f^*$ is some  $p$-ary
function from $\mathbb{F}_{p^n}$ to $\mathbb{F}_p$.

If $n=2m$, Jia et al. \cite{JZH} considered the binomial $p$-ary function of the form
\begin{align*}
f_{a,b,t}=\mathrm{\mathrm{Tr}}^n_1(a x^{t(p^m-1)})+bx^{\frac{p^n-1}{2}}
\end{align*}
and obtained the following theorem.
\begin{theorem}
Let $n=2m$, $q=p^m\equiv 3 \mod 4$ or $p=3$.  Let $\mathrm{gcd}(t,p^m+1)=1$, $a \in \mathbb{F}_{q^2}$, and $b\in \mathbb{F}_p$. Then
$f_{a,b,t}$ is a regular bent function if and only if $K_q(a^{q+1})=1-\frac{2}{\zeta^b+\zeta^{-b}}.$
\end{theorem}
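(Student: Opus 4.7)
The plan is to compute $W_{f_{a,b,t}}(\lambda)$ in closed form and show that $|W_{f_{a,b,t}}(\lambda)| = p^m$ for every $\lambda \in \mathbb{F}_{p^n}$, with the regular phase, iff the stated Kloosterman identity holds. Because $x^{(p^n-1)/2} \in \{\pm 1\}$ for $x \neq 0$, I would first expand
\begin{align*}
\zeta^{b x^{(p^n-1)/2}} = \tfrac{1}{2}(\zeta^b + \zeta^{-b}) + \tfrac{1}{2}(\zeta^b - \zeta^{-b})\,\eta(x),
\end{align*}
where $\eta$ is the quadratic multiplicative character of $\mathbb{F}_{p^n}$, splitting the Walsh sum as $1 + \tfrac12(\zeta^b + \zeta^{-b})\, S_1(\lambda) + \tfrac12(\zeta^b - \zeta^{-b})\, S_2(\lambda)$, where $S_1, S_2$ are the character sums $\sum_{x\neq 0} \zeta^{\mathrm{Tr}^n_1(ax^{t(q-1)} - \lambda x)}$ with trivial and with $\eta$-twist, respectively.

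Next I would exploit the Dillon-exponent structure. Every $x \in \mathbb{F}_{p^n}^*$ factors as $x = yu$ with $y \in \mathbb{F}_q^*$ and $u \in U := \{u \in \mathbb{F}_{p^n}^* : u^{q+1}=1\}$, the map $(y,u) \mapsto yu$ being $2$-to-$1$ onto the subgroup of squares; the nonsquare coset is handled analogously by fixing a representative. Since $x^{t(q-1)} = u^{-2t}$ depends only on $u$, the inner sum in $S_1$ over $y \in \mathbb{F}_q^*$ collapses by additive-character orthogonality to contributions from those $u$ for which $\mathrm{Tr}^n_m(\lambda u) = 0$, i.e., $\lambda u = -(\lambda u)^q$. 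Solving this constraint on $U$ and substituting, $S_1(\lambda)$ should reduce to an expression involving the Kloosterman sum in the variable $a^{q+1} = N_{\mathbb{F}_{q^2}/\mathbb{F}_q}(a)$. A parallel computation for $S_2(\lambda)$, now with the quadratic-character twist, evaluates via a quadratic Gauss sum together with the same Dillon reduction.

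Assembling the pieces, $W_{f_{a,b,t}}(\lambda)$ becomes a regular bent value for every $\lambda$ precisely when the coefficients of the non-trivial spectral piece, once cleared, yield the identity $(\zeta^b + \zeta^{-b})\, K_q(a^{q+1}) + 2 = \zeta^b + \zeta^{-b}$, equivalently $K_q(a^{q+1}) = 1 - \tfrac{2}{\zeta^b + \zeta^{-b}}$. Regularity then follows automatically because the Walsh value already lives in the lattice $p^m \mathbb{Z}[\zeta]$ and has absolute value $p^m$.

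The main obstacle will be the bookkeeping around the index-$2$ decomposition $\mathbb{F}_q^*\cdot U \subset \mathbb{F}_{p^n}^*$: both the square and nonsquare cosets must be summed, and the signs introduced by $\eta$ on $yu$ versus on a nonsquare coset representative have to be tracked through $S_1$ and $S_2$ simultaneously. The hypothesis $q \equiv 3 \pmod 4$ (or $p = 3$) is exactly what forces $-1$ to be a nonsquare in $\mathbb{F}_q$, so that $\eta(yu)$ splits multiplicatively in a way that collapses the would-be cross terms; without it, $S_2$ picks up extra structure that destroys the clean Kloosterman identification. Navigating this twist-sign bookkeeping, rather than evaluating any single Gauss sum, is where the genuine technical work lies.
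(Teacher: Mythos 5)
This theorem is not proved in the paper at all: it is quoted verbatim from Jia et al.\ \cite{JZH} (with its strengthening, Theorem \ref{Zh}, taken from Zheng et al.\ \cite{ZYH}), so there is no in-paper argument to compare against. Your overall strategy --- split $\zeta^{bx^{(p^n-1)/2}}$ into its even and odd parts with respect to the quadratic character and then run the Dillon-exponent reduction of the Walsh sum through the norm-one group $U=\{u:u^{q+1}=1\}$ --- is indeed the approach of those references, so the plan is the right one. But as it stands it is a plan, not a proof, and the gaps are exactly at the load-bearing points.

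Concretely: (i) both central evaluations are asserted rather than derived --- you say $S_1(\lambda)$ ``should reduce'' to a Kloosterman expression, but the entire content of the theorem is the identity expressing the partial sum $\sum_{u\in U}\zeta^{\mathrm{Tr}^n_1(au)}$ (and its $t$-power variant) in terms of $K_q(a^{q+1})$, which you never state or establish; without it the equivalence with $K_q(a^{q+1})=1-\frac{2}{\zeta^b+\zeta^{-b}}$ cannot be extracted. (ii) The twisted sum $S_2$ is dispatched with ``evaluates via a quadratic Gauss sum,'' yet this is precisely where the difficulty sits: $u\mapsto u^{-2t}$ is $2$-to-$1$ onto an index-$2$ subgroup of $U$ because $q+1$ is even, $(y,u)\mapsto yu$ is $2$-to-$1$ onto the squares of $\mathbb{F}_{q^2}^*$, and the nonsquare coset enters with its own signs; you explicitly defer all of this. (iii) The claim that regularity ``follows automatically'' because $W_f(\lambda)\in p^m\mathbb{Z}[\zeta]$ with $|W_f(\lambda)|=p^m$ is not correct as stated: such an element is of the form $\pm p^m\zeta^{j}$, and the sign $-1$ yields a weakly regular but not regular bent function, so the sign must actually be determined. (iv) Your explanation of the hypothesis ($-1$ a nonsquare in $\mathbb{F}_q$) accounts for $q\equiv 3\bmod 4$ but not for the alternative ``or $p=3$,'' which your argument as described would therefore miss; the removal of this hypothesis is exactly the improvement of \cite{ZYH} recorded as Theorem \ref{Zh}.
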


Zheng et al. \cite{ZYH} improved results in \cite{JZH} and gave the following results.
\begin{theorem}\label{Zh}
Let $n=2m$, $q=p^m$, $\mathrm{gcd}(t,p^m+1)=1$, $a \in \mathbb{F}_{q^2}$, and $b\in \mathbb{F}_p$. Then
$f_{a,b,t}$ is regular bent if and only if $K_q(a^{q+1})=1-\frac{2}{\zeta^b+\zeta^{-b}}.$
\end{theorem}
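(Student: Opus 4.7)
The plan is to compute the Walsh transform $W_{f_{a,b,t}}(\lambda)$ in closed form and to read off the bent-ness condition.

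First I would stratify $\mathbb{F}_{q^2}^{*}$ by the fibres of $x\mapsto u:=x^{q-1}$, whose image is the group $U=\{u\in\mathbb{F}_{q^2}^{*}:u^{q+1}=1\}$ of $(q+1)$-st roots of unity. Each fibre $F_u=\{x:x^{q-1}=u\}$ is a coset of $\mathbb{F}_q^{*}$ of cardinality $q-1$, on which both pieces of $f_{a,b,t}$ are constant: $x^{t(q-1)}=u^t$ and $x^{(q^2-1)/2}=\varepsilon(u)\in\{\pm1\}$, where $\varepsilon(u)=u^{(q+1)/2}$ is the quadratic character of $U$. Splitting the Walsh sum along these fibres and applying the orthogonality relation for additive characters of $\mathbb{F}_q$ to the inner sum over $y\in\mathbb{F}_q^{*}$ (which picks out, for $\lambda\neq 0$, the unique $u_\lambda=-\lambda^{1-q}\in U$ at which $\mathrm{Tr}^n_m(\lambda x_0)$ vanishes) gives
\begin{equation*}
W_{f_{a,b,t}}(\lambda)=1-\Sigma+q\cdot\zeta^{\mathrm{Tr}^n_1(au_\lambda^t)+b\varepsilon(u_\lambda)},\qquad\Sigma:=\sum_{u\in U}\zeta^{\mathrm{Tr}^n_1(au^t)+b\varepsilon(u)},
\end{equation*}
together with $W_{f_{a,b,t}}(0)=1+(q-1)\Sigma$.

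Since the phase $\mathrm{Tr}^n_1(au_\lambda^t)+b\varepsilon(u_\lambda)$ takes several distinct $\mathbb{F}_p$-values as $\lambda$ ranges, the requirement $|W_{f_{a,b,t}}(\lambda)|=q$ for every $\lambda$ forces $1-\Sigma=0$; conversely $\Sigma=1$ makes $W_{f_{a,b,t}}(\lambda)=q\,\zeta^{f^{*}(\lambda)}$ with $f^{*}(\lambda)=\mathrm{Tr}^n_1(au_\lambda^t)+b\varepsilon(u_\lambda)\in\mathbb{F}_p$, the required regular-bent form. Hence $f_{a,b,t}$ is regular bent if and only if $\Sigma=1$. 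The hypothesis $\gcd(t,q+1)=1$ makes $u\mapsto u^t$ a bijection of $U$ that preserves $\varepsilon$ (as $t$ is odd, since $q+1$ is even), so we may drop the $t$-exponent and write $\Sigma=\sum_{u\in U}\zeta^{\mathrm{Tr}^n_1(au)+b\varepsilon(u)}$.

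Expanding $\zeta^{b\varepsilon(u)}=\tfrac{1}{2}(\zeta^b+\zeta^{-b})+\tfrac{1}{2}\varepsilon(u)(\zeta^b-\zeta^{-b})$ and using the classical identity $\sum_{u\in U}\zeta^{\mathrm{Tr}^n_1(au)}=1-K_q(a^{q+1})$, obtained by parametrising the twisted orbit $aU=\{w\in\mathbb{F}_{q^2}^{*}:w^{q+1}=a^{q+1}\}$ through the two-to-one map $w\mapsto w+a^{q+1}/w\in\mathbb{F}_q$ and matching the substitution $y\mapsto1/y$ in the definition of $K_q(a^{q+1})$, turns $\Sigma=1$ into
\begin{equation*}
(\zeta^b+\zeta^{-b})\bigl(1-K_q(a^{q+1})\bigr)+(\zeta^b-\zeta^{-b})\,D=2,\qquad D:=\sum_{u\in U}\varepsilon(u)\zeta^{\mathrm{Tr}^n_1(au)}.
\end{equation*}
For $b=0$ this reduces immediately to $K_q(a^{q+1})=0=1-\tfrac{2}{\zeta^0+\zeta^0}$, and we are done.

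The main obstacle is the case $b\neq 0$: one must separate the displayed equation into the two independent conditions $K_q(a^{q+1})=1-\tfrac{2}{\zeta^b+\zeta^{-b}}$ and $D=0$. I would attack this via a Galois-parity argument combined with a direct evaluation of the twisted sum $D$. Under $\sigma:\zeta\mapsto\zeta^{-1}$, $K_q(a^{q+1})\in\mathbb{R}$ is fixed, while the substitution $u\mapsto -u$ in $D$ yields $\sigma(D)=\varepsilon(-1)\,D=(-1)^{(q+1)/2}D$; combining this Galois symmetry with a Gauss-sum evaluation of $D$ (via the Davenport--Hasse norm relation applied to a lift of $\varepsilon$) expresses $D$ as a polynomial in $K_q(a^{q+1})$ that vanishes exactly at the values $1-\tfrac{2}{\zeta^b+\zeta^{-b}}$, giving the desired equivalence in both directions. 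This last step is precisely the refinement that lets Zheng et al.\ remove the restriction $q\equiv 3\pmod 4$ or $p=3$ of the earlier Jia--Zheng--Helleseth proof.
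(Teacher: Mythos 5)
First, a point of reference: the paper does not prove this statement at all --- Theorem~\ref{Zh} is imported verbatim from Zheng--Yu--Hu \cite{ZYH} --- so your proposal can only be judged on its own merits. Up to the displayed equation your reduction is correct and is indeed how such proofs begin: the fibration of $\mathbb{F}_{q^2}^{*}$ over $U$ by $x\mapsto x^{q-1}$, the orthogonality step giving $W_{f_{a,b,t}}(\lambda)=1-\Sigma+q\zeta^{\mathrm{Tr}^n_1(au_\lambda^t)+b\varepsilon(u_\lambda)}$, the elimination of $t$ (valid since $t$ must be odd), and the identity $\sum_{u\in U}\zeta^{\mathrm{Tr}^n_1(au)}=1-K_q(a^{q+1})$ are all sound. (You do implicitly need $a\neq 0$, and the passage from regular bentness to $\Sigma=1$ is cleaner via an algebraic-integrality argument than via ``the phase takes several distinct values,'' but these are repairable details.)

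The genuine gap is the final step, which you yourself flag as ``the main obstacle'' and then only gesture at: the entire content of the theorem is the equivalence, for $b\neq 0$, of the single equation $(\zeta^b+\zeta^{-b})(1-K)+(\zeta^b-\zeta^{-b})D=2$ with the pair of conditions $K=1-\tfrac{2}{\zeta^b+\zeta^{-b}}$ and $D=0$, and neither of your proposed tools delivers it. The Galois-parity relation $\sigma_{-1}(D)=\varepsilon(-1)D$ kills the $D$-term only when $\varepsilon(-1)=+1$, i.e.\ $q\equiv 3\pmod 4$ (then $(\zeta^b-\zeta^{-b})D$ is negated by $\sigma_{-1}$ while the rest of the equation is fixed), and even then only in the ``only if'' direction; for $q\equiv 1\pmod 4$ --- exactly the case \cite{ZYH} adds beyond \cite{JZH} --- it yields nothing. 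Worse, the claim that $D$ equals ``a polynomial in $K_q(a^{q+1})$ vanishing exactly at the values $1-\tfrac{2}{\zeta^b+\zeta^{-b}}$'' cannot be correct as stated: $D$ is not a function of $a^{q+1}$, since replacing $a$ by $a\gamma$ with $\gamma\in U$, $\varepsilon(\gamma)=-1$, leaves $K_q(a^{q+1})$ unchanged but negates $D$ (concretely, in $\mathbb{F}_9$ with $q=3$ one computes $D=-3$ for $a=1$ and $D=+3$ for $a=\sqrt{-1}$, both with $a^{q+1}=1$); moreover the alleged vanishing locus depends on $b$ while $D$ does not. In particular the ``if'' direction requires showing that $K_q(a^{q+1})=1-\tfrac{2}{\zeta^b+\zeta^{-b}}$ with $b\neq 0$ forces $D=0$, and nothing in your outline establishes this --- that is precisely the substantive input of \cite{ZYH}, obtained there by expanding the two partial sums over $\{u:\varepsilon(u)=\pm 1\}$ in the integral basis $1,\zeta,\dots,\zeta^{p-2}$ and analysing the resulting integer value-distribution equations, not by a closed-form evaluation of $D$. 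As written, your argument proves the $b=0$ case and reduces the general case to an unproven --- and, in the form you state it, false --- claim about $D$.
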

\begin{theorem}
Let $n=2m$, $q=p^m\equiv 1\mod 4$, $t\equiv 2 \mod 4$, $\mathrm{gcd}(\frac{t}{2},p^m+1)=1$, $a \in \mathbb{F}_{q^2}$, and $b\in \mathbb{F}_p$. Then
$f_{a,b,t}$ is regular bent if and only if $K_q(a^{q+1})=1-\frac{2}{\zeta^b+\zeta^{-b}}.$
\end{theorem}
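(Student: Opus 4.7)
The plan is to prove the theorem by computing the Walsh transform $W_{f_{a,b,t}}(\lambda)$ for all $\lambda\in\mathbb{F}_{p^n}$ in closed form, paralleling the proof of Theorem~\ref{Zh}. Since $q=p^m\equiv 1\pmod{4}$ implies that $(p^m+1)/2$ is odd, and $t\equiv 2\pmod{4}$ gives $\gcd(t,p^m+1)=2$, the present setting differs from Theorem~\ref{Zh} in that $x\mapsto x^{t(p^m-1)}$ maps $\mathbb{F}_{p^n}^*$ onto $U^2$, the subgroup of squares in $U=\mu_{p^m+1}(\mathbb{F}_{p^n})$, rather than onto all of $U$.

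First I would decompose $\mathbb{F}_{p^n}^*=S_+\sqcup S_-$ into squares and non-squares, and parameterize $S_+=\{ry:r\in\mathbb{F}_q^*,\,y\in U\}$ (a $2$-to-$1$ correspondence, since $\mathbb{F}_q^*\cap U=\{\pm 1\}$), and similarly $S_-=\{r\eta y\}$ for a fixed non-square $\eta\in\mathbb{F}_{p^n}^*$. Orthogonality of additive characters over $\mathbb{F}_q$ reduces the inner sum over $r$ to a conditional sum picking out $y\in U$ with $\mathrm{Tr}^n_m(\lambda y)=0$ (respectively, $\mathrm{Tr}^n_m(\lambda\eta y)=0$). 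Since $y^{p^m-1}=y^{-2}$ for $y\in U$, this trace condition becomes $y^2=-\lambda^{q-1}$, which is solvable in $U$ precisely when $-\lambda^{q-1}\in U^2$; using $-1\notin U^2$ (a consequence of $q\equiv 1\pmod{4}$), exactly one of $S_+$ or $S_-$ contributes nontrivially for each $\lambda\neq 0$.

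The crucial structural observation is that since $t$ is even, the twist factor $\eta^{t(p^m-1)}$ lies in $U^2$, so multiplication by it is a bijection of $U^2$. Consequently, the $S_-$ contribution reduces to the same $U^2$-sum as the $S_+$ contribution, and the Walsh transform takes the form $W_{f_{a,b,t}}(\lambda)=1-(\zeta^b+\zeta^{-b})T(a)+q\,\zeta^{f^*(\lambda)}$ for $\lambda\neq 0$, where $T(a)=\sum_{z\in U^2}\zeta^{\mathrm{Tr}^n_1(az)}$ and $f^*:\mathbb{F}_{p^n}\to\mathbb{F}_p$ is explicit. Regular bentness forces the constant term $1-(\zeta^b+\zeta^{-b})T(a)$ to vanish; combining this with the standard identity $\sum_{z\in U}\zeta^{\mathrm{Tr}^n_1(az)}=1-K_q(a^{q+1})$ and the decomposition $U=U^2\sqcup(-U^2)$ yields the Kloosterman condition $K_q(a^{q+1})=1-\frac{2}{\zeta^b+\zeta^{-b}}$, with the converse following by reversing the argument and verifying the corresponding $W_{f_{a,b,t}}(0)$ also has the required value $q\,\zeta^{f^*(0)}$.

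The main obstacle is the careful handling of the $\eta$-twist and its interaction with $U^2$: the assumptions $t\equiv 2\pmod{4}$ (so $t$ is even, placing $\eta^{t(p^m-1)}$ in $U^2$) and $\gcd(t/2,p^m+1)=1$ (so the induced power map on $U^2$ remains a bijection), together with $(p^m+1)/2$ being odd, are precisely what make the $S_+$ and $S_-$ contributions combine into the same clean Kloosterman characterization obtained in Theorem~\ref{Zh} for the $\gcd(t,p^m+1)=1$ case.
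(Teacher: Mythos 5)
First, note that the paper does not prove this statement at all: it is quoted from Zheng, Yu and Hu \cite{ZYH} as background for Section 4, so there is no internal proof to compare your argument against. Judged on its own, your setup is the standard one for Dillon-exponent functions and most of it is correct: the square/non-square decomposition, the $2$-to-$1$ parametrizations by $\mathbb{F}_q^*\times U$, the reduction via character orthogonality to the condition $y^2=-\lambda^{q-1}$, the fact that $-1\notin U^2$ because $(q+1)/2$ is odd, and the observation that $\eta^{t(p^m-1)}\in U^2$ because $t$ is even. These steps do lead to $W_{f_{a,b,t}}(\lambda)=1-(\zeta^b+\zeta^{-b})T(a)+q\zeta^{g(\lambda)}$ for $\lambda\neq0$ with $T(a)=\sum_{z\in U^2}\zeta^{\mathrm{Tr}^n_1(az)}$, and hence to the criterion $(\zeta^b+\zeta^{-b})T(a)=1$ for regular bentness.

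The gap is in your final step. Writing $U=U^2\sqcup(-U^2)$, the complementary piece of the Kloosterman identity is $\sum_{z\in U^2}\zeta^{\mathrm{Tr}^n_1(-az)}=\overline{T(a)}$, so $1-K_q(a^{q+1})=T(a)+\overline{T(a)}=2\,\mathrm{Re}\,T(a)$; the two halves of $U$ do \emph{not} contribute equally in general. Thus bentness is equivalent to $T(a)=\frac{1}{\zeta^b+\zeta^{-b}}$, while the Kloosterman condition only asserts $\mathrm{Re}\,T(a)=\frac{1}{\zeta^b+\zeta^{-b}}$. The forward implication is fine because $\frac{1}{\zeta^b+\zeta^{-b}}$ is real, but ``reversing the argument'' for the converse requires showing that $T(a)-\overline{T(a)}=\sum_{u\in U}\psi(u)\zeta^{\mathrm{Tr}^n_1(au)}$ vanishes, where $\psi$ is the quadratic character of $U$. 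This Sali\'e-type twisted sum is not identically zero: for instance with $m=1$ and $a=1$ one computes $T(1)=\zeta^2+2\zeta^4$ for $p=5$, and $T(1)=\zeta^2+2(\zeta^7+\zeta^8+\zeta^{10})$ for $p=13$, neither of which is real. So the claim that the Kloosterman condition "yields" the bentness criterion by reversing the computation does not follow as stated; handling this partial sum (showing it must vanish whenever $2\,\mathrm{Re}\,T(a)=\frac{2}{\zeta^b+\zeta^{-b}}$, or otherwise closing the equivalence) is precisely the extra difficulty of the $\gcd(t,q+1)=2$ case beyond Theorem \ref{Zh}, and it is the one point your proposal leaves unaddressed.
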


\section{Kloosterman sums}
For abbreviation, we denote $\sum_{i_1,\cdots,i_t\in \{0,\cdots,m-1\},i_j\neq i_k(j\neq k)}A_{i_1,\cdots,i_t}$ by $\sum A_{i_1,\cdots,i_t}$.

\subsection{The $\pi$-adic expansion of Kloosterman sums}
\begin{lemma}\label{Fmsum}
Let $m\ge 1$ and $a\in \mathbb{F}_{p^m}$, then

\rm{(1)} $\sum \omega^{p^i+p^j}(a)=(\widehat{\mathrm{Tr}}^m_1(a))^2-\widehat{\mathrm{Tr}}^m_1(a^2);$

\rm{(2)} $\sum \omega^{2p^i+p^j}(a)=(\widehat{\mathrm{Tr}}^m_1(a))(\widehat{\mathrm{Tr}}^m_1(a^2))-\widehat{\mathrm{Tr}}(a^3);$

\rm{(3)} $\sum \omega^{p^i+p^j+p^k}(a)=(\widehat{\mathrm{Tr}}^m_1(a))^3-3(\widehat{\mathrm{Tr}}^m_1(a))
(\widehat{\mathrm{Tr}}^m_1(a^2))+2\widehat{\mathrm{Tr}}^m_1(a^3)$;

\rm{(4)} $\sum \omega^{2p^i+2p^j}(a)=(\widehat{\mathrm{Tr}}^m_1(a^2))^2-\widehat{\mathrm{Tr}}^m_1(a^4);$

\rm{(5)} $\sum \omega^{3p^i+p^j}(a)=(\widehat{\mathrm{Tr}}^m_1(a))(\widehat{\mathrm{Tr}}^m_1(a^3))-\widehat{\mathrm{Tr}}^m_1(a^4);$

\rm{(6)} $\sum \omega^{2p^i+p^j+p^k}(a)=(\widehat{\mathrm{Tr}}^m_1(a))^2(\widehat{\mathrm{Tr}}^m_1(a^2))-
2(\widehat{\mathrm{Tr}}^m_1(a))(\widehat{\mathrm{Tr}}^m_1(a^3))-(\widehat{\mathrm{Tr}}^m_1(a^2))^2+
2\widehat{\mathrm{Tr}}^m_1(a^4);$

\rm{(7)} $\sum \omega^{p^i+p^j+p^k+p^l}(a)=(\widehat{\mathrm{Tr}}^m_1(a))^4-
6(\widehat{\mathrm{Tr}}^m_1(a))^2(\widehat{\mathrm{Tr}}^m_1(a^2))+3(\widehat{\mathrm{Tr}}^m_1(a^2))^2+
8(\widehat{\mathrm{Tr}}^m_1(a))(\widehat{\mathrm{Tr}}^m_1(a^3))-6\widehat{\mathrm{Tr}}^m_1(a^4).$
\end{lemma}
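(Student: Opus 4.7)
The plan is to translate all seven identities into symmetric-function relations in the Frobenius conjugates of $\omega(a)$ and then to apply Newton's identities (or one-line direct manipulations). First I would set
\[
x_i := \omega^{p^i}(a), \qquad i = 0, 1, \ldots, m-1.
\]
Since $\omega$ is a multiplicative character, $\omega^{kp^i}(a) = x_i^k$, and hence
\[
\widehat{\mathrm{Tr}}^m_1(a^k) = \sum_{i=0}^{m-1} x_i^k =: p_k.
\]
Every left-hand side in the lemma then becomes an ordered sum over pairwise distinct indices of a monomial in the $x_i$'s, and every right-hand side is a polynomial in $p_1, p_2, p_3, p_4$; the problem is now purely combinatorial.

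For items (1), (4), and (5) I would use the one-line identity
\[
\sum_{i \neq j} x_i^a x_j^b = p_a p_b - p_{a+b},
\]
obtained by subtracting the diagonal $i = j$ from the full double sum $(\sum_i x_i^a)(\sum_j x_j^b)$. For (2), fixing the index carrying the square gives $\sum_{i\neq j} x_i^2 x_j = \sum_i x_i^2 (p_1 - x_i) = p_1 p_2 - p_3$.

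For the triple sum in (3) and the quadruple sum in (7), my plan is to recognise the ordered distinct-index sums as $3!\, e_3$ and $4!\, e_4$ respectively, and then to substitute the Newton identities
\[
6 e_3 = p_1^3 - 3 p_1 p_2 + 2 p_3, \qquad 24 e_4 = p_1^4 - 6 p_1^2 p_2 + 3 p_2^2 + 8 p_1 p_3 - 6 p_4.
\]
Item (6) is the only genuinely asymmetric case: I would condition on the distinguished index $i$ carrying the square and use
\[
\sum_{\substack{j \neq k \\ j, k \neq i}} x_j x_k = (p_1 - x_i)^2 - (p_2 - x_i^2),
\]
then multiply by $x_i^2$, sum in $i$, and collect terms to obtain $p_1^2 p_2 - 2 p_1 p_3 - p_2^2 + 2 p_4$.

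There is no genuine obstacle here; the single point requiring care is the bookkeeping convention that $\sum$ in the lemma denotes an ordered sum over pairwise distinct indices, so the factorial prefactors $3! = 6$ and $4! = 24$ enter naturally when passing through elementary symmetric polynomials. Once this is fixed, the multiplicativity of the Teichm\"{u}ller character makes the power-sum interpretation of $\widehat{\mathrm{Tr}}^m_1(a^k)$ automatic, and each of the seven identities reduces to a short Newton--Girard computation.
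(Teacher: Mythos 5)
Your proposal is correct and is essentially the paper's own argument: both reduce each identity to symmetric-function bookkeeping for the power sums $p_k = \widehat{\mathrm{Tr}}^m_1(a^k) = \sum_i \omega^{kp^i}(a)$, obtained by multiplying lower-order sums and subtracting the diagonal contributions. The only cosmetic difference is that you invoke Newton--Girard directly for the fully symmetric cases (3) and (7), whereas the paper re-derives those same relations recursively from its parts (1), (2) and (6); all seven of your claimed right-hand sides check out against the lemma.
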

\begin{proof}
(1) From
\begin{align*}
(\sum \omega^{p^i}(a))^2=\sum \omega^{p^i+p^j}(a)+\sum \omega^{2p^i}(a),
\end{align*}
this result can be obviously obtained.

(2) From
\begin{align*}
(\sum \omega^{2p^i}(a))(\sum \omega^{p^j}(a))=\sum \omega^{2p^i+p^j}(a)+\sum \omega^{3p^i}(a),
\end{align*}
this result can be obviously obtained.

(3) From
\begin{align*}
(\sum \omega^{p^i}(a))(\sum \omega^{p^j+p^k}(a))=\sum \omega^{p^i+p^j+p^k}(a)+2\sum \omega^{2p^i+p^j}(a),
\end{align*}
Result (1) and Result (2), this result can be obviously obtained.

(4) From Result (1), this result can be obviously obtained.

(5) From
\begin{align*}
(\sum \omega^{3p^i}(a))(\sum \omega^{p^j}(a))=\sum \omega^{3p^i+p^j}(a)+\sum \omega^{4p^i}(a),
\end{align*}
this result can be obviously obtained.

(6) From
\begin{align*}
(\sum \omega^{2p^i+p^j}(a))(\sum \omega^{p^k}(a))=\sum \omega^{2p^i+p^j+p^k}(a)+\sum \omega^{3p^i+p^j}(a)+\sum \omega^{2p^i+2p^j}(a),
\end{align*}
Result (2), Result (4), and Result (5), this result can be obviously obtained.

(7) From
\begin{align*}
(\sum \omega^{p^i}(a))(\sum \omega^{p^j+p^k+p^l}(a))=\sum \omega^{p^i+p^j+p^k+p^l}(a)+3\sum \omega^{2p^j+p^k+p^l}(a),
\end{align*}
Result (3), and Result (6), this result can be obviously obtained.
\end{proof}

The following proposition is a generalization of results in Chapter 5.3 in \cite{Ml} of Moloney.
\begin{proposition}\label{a8}
Let $p$ be a prime greater than 11, $m\ge 1$, and $a\in \mathbb{F}_{p^m}^*$. $K_q(a)$ is the Kloosterman sum of $a$ defined in (\ref{Kq}). Let the $\pi$-adic expansion of
$K_q(a)$ in $Q_p(\zeta)$ be of the form $K_q(a)=\sum_{i=0}^{+\infty}a_i \pi ^i,$
where $\zeta$ is a primitive $p-$th root of unity, $\pi$ is a prime of $Q_p(\zeta)$ satisfying $\pi ^{p-1}=-p$, $\zeta\equiv 1+\pi \mod \pi ^2$, and $a_i^p=a_i$. Then

\rm{(1)} [Chapter 5.3 in \cite{Ml}] $a_0,a_2,a_4,a_6$ are determined by
\begin{align*}
a_0&=0,\\
a_2&\equiv -\mathrm{\mathrm{Tr}}^m_1(a) \mod p,\\
a_4&\equiv  \frac{1}{4}(\mathrm{\mathrm{Tr}}^m_1(a^2)-2(\mathrm{\mathrm{Tr}}^m_1(a))^2)\mod p,\\
a_6&\equiv -\frac{1}{36}(4\mathrm{\mathrm{Tr}}^m_1(a^3)+6(\mathrm{\mathrm{Tr}}^m_1(a))^3-9(\mathrm{\mathrm{Tr}}^m_1(a))(\mathrm{\mathrm{Tr}}^m_1(a^2)))\mod p;
\end{align*}

\rm{(2)} $a_8\equiv -\frac{1}{576}(24(\mathrm{\mathrm{Tr}}^m_1(a))^4-72(\mathrm{\mathrm{Tr}}^m_1(a))^2(\mathrm{\mathrm{Tr}}^m_1(a^2))+64(\mathrm{\mathrm{Tr}}^m_1(a))(\mathrm{\mathrm{Tr}}^m_1(a^3))+18(\mathrm{\mathrm{Tr}}^m_1(a^2))^2-33\mathrm{\mathrm{Tr}}^m_1(a^4)) \mod p$.

\rm{(3)} $a_{2i+1}=0$, $i=0,1,2,\cdots.$
\end{proposition}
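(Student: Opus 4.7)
The plan is to push the Stickelberger-based expansion that established part (1) up one digit-weight. From (\ref{Kmorq}),
\begin{align*}
K_q(a)\equiv \sum_{j=1}^{q-2}(g(j))^2\omega^j(a)\pmod{q},
\end{align*}
and since $q$ has $\pi$-adic valuation $m(p-1)\ge p-1>8$, this congruence determines $K_q(a)$ modulo $\pi^9$. Theorem \ref{g(j)} gives $(g(j))^2\equiv \pi^{2\,\mathrm{wt}_p(j)}/(j_0!\cdots j_{m-1}!)^2$ modulo $\pi^{2\,\mathrm{wt}_p(j)+p-1}$, and for $p>11$ the first correction term in each $(g(j))^2$ lies at $\pi$-level at least $\pi^{p+1}\ge\pi^{13}$, safely beyond the $\pi^8$ precision that we need.

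For part (2) I would first pin down which $j$'s contribute to $a_8$. The leading Stickelberger terms for $\mathrm{wt}_p(j)\le 3$ account only for $a_0,a_2,a_4,a_6$ (exactly reproducing part (1)), and their corrections lie at $\pi$-orders $\ge p+1>8$. Terms with $\mathrm{wt}_p(j)\ge 5$ start at $\pi^{\ge 10}$. Hence $a_8$ is determined by $\mathrm{wt}_p(j)=4$ alone. I would enumerate the five digit-patterns
\begin{align*}
4p^i,\quad 3p^i+p^j,\quad 2p^i+2p^j,\quad 2p^i+p^j+p^k,\quad p^i+p^j+p^k+p^l
\end{align*}
(with the indicated indices distinct), record the Stickelberger coefficients $1/(\prod j_i!)^2$ as $1/576,1/36,1/16,1/4,1$ respectively, and apply Lemma \ref{Fmsum}(4)--(7) to express the corresponding sum of $\omega^j(a)$ over each pattern as a polynomial in $\widehat{\mathrm{Tr}}^m_1(a^r)$ for $r=1,2,3,4$. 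A multiplicity factor must be inserted whenever the paper's ordered $\sum$-notation over-counts the unordered $j$-indexing, namely $\tfrac12$ for the patterns $2p^i+2p^j$ and $2p^i+p^j+p^k$ and $\tfrac{1}{24}$ for $p^i+p^j+p^k+p^l$. Summing the five contributions, reducing modulo $p$ so that $\widehat{\mathrm{Tr}}^m_1$ collapses to $\mathrm{Tr}^m_1$, and collecting coefficients yields the closed form stated in (2).

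For part (3) I would prefer a Galois-symmetry argument to a direct coefficient computation. The automorphism $\sigma:\zeta\mapsto \zeta^{-1}$ of $Q_p(\zeta)/Q_p$ fixes $K_q(a)$: the substitution $x\mapsto -x$ in $K_q(a)=1+\sum_{x\neq 0}\zeta^{\mathrm{Tr}^m_1(1/x+ax)}$ turns each summand into $\zeta^{-\mathrm{Tr}^m_1(1/x+ax)}$, whose total is $\sigma(K_q(a))$. The uniformizer $\pi$ is characterized by $\pi^{p-1}=-p$ and $\pi\equiv \zeta-1\pmod{\pi^2}$, and one checks that $-\pi$ satisfies both $(-\pi)^{p-1}=-p$ and $-\pi\equiv \zeta^{-1}-1\pmod{\pi^2}$; since the prime with these two properties is unique (by Teichm\"uller lifting applied to the unit $\sigma(\pi)/(-\pi)$, which is a $(p-1)$-th root of unity congruent to $1$ modulo $\pi$), this forces $\sigma(\pi)=-\pi$. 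Applying $\sigma$ to $K_q(a)=\sum_i a_i\pi^i$ yields $K_q(a)=\sum_i(-1)^i a_i\pi^i$, so $a_{2i+1}=0$ for every $i\ge 0$.

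The principal obstacle is the combinatorial accounting in part (2): the five digit-patterns each carry a distinct Stickelberger coefficient together with a distinct symmetry-multiplicity between the ordered $\sum$-notation and the unordered $j$-indexing, and the five trace-polynomials produced by Lemma \ref{Fmsum} have to be combined into the single closed form in the statement, with an overall sign consistent with the convention already fixed by part (1). The hypothesis $p>11$ is precisely the buffer that keeps Stickelberger's error terms from leaking into the $\pi^8$-level; part (3) is by comparison immediate once $\sigma(\pi)=-\pi$ has been verified.
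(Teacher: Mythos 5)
Your proposal is correct and follows essentially the same route as the paper's own proof: for part (2) it uses Wan's congruence together with Stickelberger's theorem restricted to the weight-four digit patterns, with exactly the coefficients $1/(\prod j_i!)^2$ and ordered-versus-unordered multiplicities that the paper uses, combined via Lemma \ref{Fmsum}(4)--(7) into the stated closed form. For part (3) your Galois argument ($K_q(a)$ is fixed by $\sigma_{-1}$ via $x\mapsto -x$, and $\sigma_{-1}(\pi)=-\pi$ since $\sigma_{-1}(\pi)/(-\pi)$ is a $(p-1)$-th root of unity congruent to $1$ modulo $\pi$) is the same as the paper's.
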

\begin{proof}
(1) This can be found in Chapter 5.3 in \cite{Ml} by Moloney.

(2) From Result (1), we have
\begin{align*}
-\sum_{wt_p(j)=1}g(j)^2\omega^j(a)&\equiv -\mathrm{\mathrm{Tr}}^m_1(a)\pi^2 \mod \pi^{p+1},\\
-\sum_{wt_p(j)=2}g(j)^2\omega^j(a)&\equiv \frac{1}{4}(\mathrm{\mathrm{Tr}}^m_1(a^2)-2(\mathrm{\mathrm{Tr}}^m_1(a))^2)\pi^4 \mod \pi^{p+3},\\
-\sum_{wt_p(j)=3}g(j)^2\omega^j(a)&\equiv -\frac{1}{36}(4\mathrm{\mathrm{Tr}}^m_1(a^3)+6(\mathrm{\mathrm{Tr}}^m_1(a))^3-
9(\mathrm{\mathrm{Tr}}^m_1(a))(\mathrm{\mathrm{Tr}}^m_1(a^2)))\pi^6 \mod \pi^{p+5},
\end{align*}
From $p\ge 11$ and (\ref{Kmorq}), we have
\begin{align*}
K_q(a)&\equiv -\mathrm{\mathrm{Tr}}^m_1(a)\pi^2+\frac{1}{4}(\mathrm{\mathrm{Tr}}^m_1(a^2)-2(\mathrm{\mathrm{Tr}}^m_1(a))^2)\pi^4\\
&-\frac{1}{36}(4\mathrm{\mathrm{Tr}}^m_1(a^3)+6(\mathrm{\mathrm{Tr}}^m_1(a))^3-9(\mathrm{\mathrm{Tr}}^m_1(a))(\mathrm{\mathrm{Tr}}^m_1(a^2)))\pi^6\\
&-\sum_{wt_p(j)=4}g(j)^2\omega^j(a)\mod \pi^{10}.
\end{align*}
From Theorem \ref{g(j)}, we have
\begin{align*}
-\sum_{wt_p(j)=4}g(j)^2\omega^j(a)\equiv &-\frac{\pi^8}{576}(\sum \omega^{4p^i}(a)+36\sum_{i< j} \omega^{2p^i+2p^j}(a)+16\sum \omega ^{3p^i+p^j}(a)\\
&+144\sum_{j< k} \omega^{2p^i+p^j+p^k}(a)+576\sum_{i< j < k< l} \omega^{p^i+p^j+p^j+p^k}(a)) \\
\equiv &-\frac{\pi^8}{576}(\sum \omega^{4p^i}(a)+18\sum \omega^{2p^i+2p^j}(a)+16\sum \omega ^{3p^i+p^j}(a)\\
&+72\sum \omega^{2p^i+p^j+p^k}(a)+24\sum \omega^{p^i+p^j+p^j+p^k}(a)) \mod \pi^{10}.
\end{align*}
From Result (4), (5), (6), (7) in Lemma \ref{Fmsum}, we have
\begin{align*}
a_8\equiv &-\frac{1}{576}(24(\widehat{\mathrm{Tr}}^m_1(a))^4-72(\widehat{\mathrm{Tr}}^m_1(a))^2
(\widehat{\mathrm{Tr}}^m_1(a^2))+64(\widehat{\mathrm{Tr}}^m_1(a))(\widehat{\mathrm{Tr}}^m_1(a^3))\\
&+18(\widehat{\mathrm{Tr}}^m_1(a^2))^2-33\widehat{\mathrm{Tr}}^m_1(a^4)) \mod \pi.
\end{align*}
Note that $\widehat{\mathrm{Tr}}^m_1(a)\equiv \mathrm{\mathrm{Tr}}^m_1(a)\mod p$, $\widehat{\mathrm{Tr}}^m_1(a^2)\equiv \mathrm{\mathrm{Tr}}^m_1(a^2)\mod p$, $\widehat{\mathrm{Tr}}^m_1(a^3)\equiv \mathrm{\mathrm{Tr}}^m_1(a^3)\mod p$, and $\widehat{\mathrm{Tr}}^m_1(a^4)\equiv \mathrm{\mathrm{Tr}}^m_1(a^4)\mod p$. Then
\begin{align*}
a_8\equiv &-\frac{1}{576}(24(\mathrm{\mathrm{Tr}}^m_1(a))^4-72(\mathrm{\mathrm{Tr}}^m_1(a))^2
(\mathrm{\mathrm{Tr}}^m_1(a^2))+64(\mathrm{\mathrm{Tr}}^m_1(a))(\mathrm{\mathrm{Tr}}^m_1(a^3))\\
&+18(\mathrm{\mathrm{Tr}}^m_1(a^2))^2-33\mathrm{\mathrm{Tr}}^m_1(a^4)) \mod \pi.
\end{align*}
Note that $a_8\in Z_p$, hence
\begin{align*}
a_8\equiv&-\frac{1}{576}(24(\mathrm{\mathrm{Tr}}^m_1(a))^4-72(\mathrm{\mathrm{Tr}}^m_1(a))^2(\mathrm{\mathrm{Tr}}^m_1(a^2))+64(\mathrm{\mathrm{Tr}}^m_1(a))(\mathrm{\mathrm{Tr}}^m_1(a^3))\\
&+18(\mathrm{\mathrm{Tr}}^m_1(a^2))^2-33\mathrm{\mathrm{Tr}}^m_1(a^4)) \mod p.
\end{align*}

(3) Note that $\sigma_{-1}$ is a Galois automorphism of  $Q_p(\zeta)$ satisfying $\sigma_{-1}(\zeta)=\zeta^{-1}$. The action   on Teichm\"{u}ller  elements in $Q_p$ by $\sigma_{-1}$ is ordinary. For any positive integer $k$, we have
\begin{align}\label{sigma-1}
\sigma_{-1}(K_q(a))\equiv a_0+a_1(\sigma_{-1}(\pi))+a_2(\sigma_{-1}(\pi))^2+\cdots +a_{2k+1}(\sigma_{-1}(\pi))^{2k+1} \mod \pi^{2k+2},
\end{align}
Since $\pi^{p-1}=-p$, $(\sigma_{-1}(\pi))^{p-1}=-p$ and $\sigma_{-1}(\pi)=w_{p-1}\pi$,
where $w_{p-1}$ is a $(p-1)$-th root of unity in $Q_p$. Note that
\begin{align}\label{zeta-pi}
\zeta\equiv 1+\pi \mod \pi^2.
\end{align}
Further,
\begin{align*}
\zeta^{-1}\equiv 1-\pi \mod \pi^2.
\end{align*}
From the action on (\ref{zeta-pi}) by $\sigma_{-1}$,
\begin{align*}
\zeta^{-1}\equiv 1+w_{p-1}\pi \mod \pi^2.
\end{align*}
Then
\begin{align*}
w_{p-1}\equiv -1 \mod \pi.
\end{align*}
Hence, $w_{p-1}=-1$ and $\sigma_{-1}(\pi)=-\pi$.
From (\ref{sigma-1}),
\begin{align*}
\sigma_{-1}(K_q(a))\equiv a_0-a_1\pi+a_2\pi^2-a_3\pi^3+\cdots+a_{2k}\pi^{2k}-a_{2k+1}\pi^{2k+1} \mod \pi^{2k+2}.
\end{align*}
From the definition of $K_q(a)$, $\sigma_{-1}(K_q(a))=K_q(a)$. Then
\begin{align*}
&a_0-a_1\pi+a_2\pi^2-a_3\pi^3+\cdots+a_{2k}\pi^{2k}-a_{2k+1}\pi^{2k+1}\\
\equiv &a_0-a_1\pi+a_2\pi^2-a_3\pi^3+\cdots+a_{2k}\pi^{2k}-a_{2k+1}\pi^{2k+1} \mod \pi^{2k+2}.
\end{align*}
Hence, $a_1=a_3=\cdots=\cdots=a_{2k+1}=0$. From the random choice of $k$, Result (3) holds.
\end{proof}
\begin{example}
Let $p=11$, $q=p^4$, $\zeta^p=1$, $\pi^{10}+11=0$, $\zeta\equiv 1+\pi \mod \pi^2$, $GF(q)=GF(p)(\beta)$, $\beta^4+8\beta^2+10\beta+2=0$, and $a=\beta^{2092}$.
From direct computation, we have
$K_q(a)=-2\pi^8 + 5\pi^6 + 4\pi^4 + 4\pi^2 \mod \pi^{10}$. From $\mathrm{\mathrm{Tr}}^4_1(a)=7$, $\mathrm{\mathrm{Tr}}^4_1(a^2)=4$, $\mathrm{\mathrm{Tr}}^4_1(a^3)=4$, and $\mathrm{\mathrm{Tr}}^4_1(a^4)=8$, the expansion of $K_q(a)$ is just
the result in Proposition \ref{a8}.
\end{example}
\begin{remark}
Result (1) in Proposition \ref{a8} still holds for $p=7$. Result (3) in Proposition \ref{a8} can be generalized, i.e, if $x\in Z_p(\zeta+\zeta^{-1})$,
\begin{align*}
x=a_0+a_2\pi^2+a_4\pi^4+a_6\pi^{6}+\cdots.
\end{align*}
The proof is similar.
\end{remark}

From the above discussion, we have $\sigma_{-1}(\zeta)\equiv 1-\pi \mod \pi$. Actually, we have the following general result.
\begin{lemma}\label{sigma}
Let $i$ be a nonzero integer satisfying $-\frac{p-1}{2}\le i\le \frac{p-1}{2}$, $\sigma_{i}$ is be a Galois automorphism of $Q_p(\zeta,\xi)$ such that $\sigma_{i}(\zeta)=\zeta^i$. Let $\alpha\in Q_p(\zeta,\xi)$ and $\nu_{\pi}(\alpha)\ge 0$. Let the $\pi$-adic expansion of $\alpha$ be
\begin{align*}
\alpha=a_0+a_1\pi+a_2\pi^2+\cdots,
\end{align*}
where $a_i\in Q_p(\zeta,\xi)$ and $a_i^{q}=a_i$. Then
\begin{align*}
\sigma_{i}(\alpha)=a_0+a_1\omega(i)\pi+\cdots+a_j\omega^j(i)\pi^j+\cdots,
\end{align*}
where $\omega(\cdot)$ is the Teichm\"{u}ller character. In particular, $\sigma_{i}(\alpha)\equiv a_0+a_1i\pi\mod \pi^{2}$.
\end{lemma}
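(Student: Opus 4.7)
The plan is to reduce everything to computing $\sigma_i(\pi)$. First, I observe that $\sigma_i$ must fix the maximal unramified subfield $Q_p(\xi)$ pointwise (this is the implicit convention here): the Galois group of the totally ramified extension $Q_p(\zeta)/Q_p$ acts on $\zeta$ alone, and the natural lift to $Q_p(\zeta,\xi)$ is by the identity on the unramified factor $Q_p(\xi)$. Because each coefficient $a_j$ satisfies $a_j^q=a_j$, it is a Teichm\"uller representative of an element of $\mathbb{F}_q$, hence lies in $Q_p(\xi)$ and is fixed by $\sigma_i$.

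Next I would determine $\sigma_i(\pi)$. Applying $\sigma_i$ to $\pi^{p-1}=-p$ yields $\sigma_i(\pi)^{p-1}=-p$, so $\sigma_i(\pi)=u\pi$ with $u^{p-1}=1$. Every $(p-1)$-th root of unity in $Q_p(\zeta,\xi)$ is a Teichm\"uller lift of an element of $\mathbb{F}_p^*$, so $u=\omega(t)$ for a unique $t\in\{1,\dots,p-1\}$. To pin down $t$, apply $\sigma_i$ to $\zeta\equiv 1+\pi\pmod{\pi^2}$ to obtain $\zeta^i\equiv 1+u\pi\pmod{\pi^2}$, and compare with the direct expansion $\zeta^i=(1+\pi+O(\pi^2))^i\equiv 1+i\pi\pmod{\pi^2}$. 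This forces $u\equiv i\pmod{\pi}$; since the hypothesis $-\tfrac{p-1}{2}\le i\le\tfrac{p-1}{2}$ with $i\neq 0$ ensures $i\not\equiv 0\pmod p$, the Teichm\"uller lift $\omega(i)$ is well defined and $u=\omega(i)$. Hence $\sigma_i(\pi)=\omega(i)\pi$.

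Combining the two ingredients gives $\sigma_i(a_j\pi^j)=a_j\,\omega(i)^j\pi^j=a_j\,\omega^j(i)\pi^j$, and the $\pi$-adic continuity of $\sigma_i$ propagates this term-by-term to the convergent series $\alpha=\sum_j a_j\pi^j$. The ``in particular'' clause is then immediate from $\omega(i)\equiv i\pmod{\pi}$, which gives $a_1\omega(i)\pi\equiv a_1 i\pi\pmod{\pi^2}$. The only remotely delicate point is the first paragraph's identification of the intended Galois lift; after that, the computation is a short bookkeeping exercise and matches the specialization $\sigma_{-1}(\pi)=-\pi$ already worked out inside Proposition \ref{a8}.
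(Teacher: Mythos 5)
Your proof is correct and follows essentially the same route as the paper's: both fix the Teichm\"uller coefficients using that $\sigma_i$ acts trivially on the unramified part, determine $\sigma_i(\pi)=\omega(i)\pi$ by comparing $\sigma_i(\zeta)\equiv 1+\sigma_i(\pi)$ with $(1+\pi)^i\equiv 1+i\pi \pmod{\pi^2}$ and using that $\sigma_i(\pi)/\pi$ is a $(p-1)$-th root of unity, and then conclude term by term. Your remark that the hypothesis on $i$ guarantees $i\not\equiv 0\pmod p$ is a small point of care the paper leaves implicit.
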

\begin{proof}
Note that $\sigma_{i}$ acts  ordinary on  unramified extension $Q_p(\xi)$. If $a^q=a$, $\sigma_i(a)=a$.

Since $\zeta\equiv 1+\pi\mod \pi^2$, $\sigma_i(\zeta)\equiv 1+\sigma_i(\pi)\mod \pi^2$.
On the other hand, $\sigma_i(\zeta)\equiv (1+\pi)^i=1+i\pi\mod \pi^2$. Then $\sigma_i(\pi)\equiv i\pi\mod \pi^2$.
From the definition of $\pi$, $\sigma_i(\pi)=w\pi$, where $w$ is some $(p-1)$-th
root of unity. From $w\equiv i\mod \pi$,
$w=\omega(i)$, $\sigma_i(\pi)=\omega(i)\pi$. Then we have
\begin{align*}
\sigma_{i}(\alpha)=a_0+a_1\omega(i)\pi+\cdots+a_j\omega^j(i)\pi^j+\cdots.
\end{align*}
From $\omega(i)=i\mod p$, we have $\sigma_{i}(\alpha)\equiv a_0+a_1i\pi\mod \pi^{2}$.
Hence, this lemma holds.
\end{proof}

\subsection{The $\pi$-adic expansion of elements in $Q_{p}(\zeta)$}
\begin{proposition}\label{zetapi}
Let $p$ be a prime greater than $11$ and  $\zeta$ be
a $p$-th root of unity satisfying $\zeta\equiv 1+\pi\mod \pi^2$. Then

\rm{(1)} $\zeta\equiv 1 + \pi + \frac{1}{2}\pi^2 + \frac{1}{6}\pi^3 + \frac{1}{24}\pi^4 + \frac{1}{120}\pi^5 + \frac{1}{720}\pi^6 + \frac{1}{5040}\pi^7 + \frac{1}{40320}\pi^8  \mod \pi^{9}$;

\rm{(2)} $\zeta^{-1}\equiv 1 - \pi + \frac{1}{2}\pi^2 - \frac{1}{6}\pi^3 + \frac{1}{24}\pi^4 - \frac{1}{120}\pi^5 + \frac{1}{720}\pi^6 - \frac{1}{5040}\pi^7 + \frac{1}{40320}\pi^8  \mod \pi^{9}$;

\rm{(3)} $\zeta+\zeta^{-1}\equiv 2 +\pi^2 + \frac{1}{12}\pi^4 + \frac{1}{360}\pi^6 + \frac{1}{20160}\pi^8  \mod \pi^{10}$;

\rm{(4)} $1-\frac{2}{\zeta+\zeta^{-1}}\equiv \frac{1}{2}\pi^2 - \frac{5}{24}\pi^4 + \frac{61}{720}\pi^6 - \frac{277}{8064}\pi^8  \mod \pi^{10}$.
\end{proposition}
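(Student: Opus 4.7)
The plan is to prove part (1) first and then derive (2)--(4) from it by short algebraic manipulations. For part (1), I would write the $\pi$-adic expansion as $\zeta = 1 + \sum_{k \ge 1} c_k \pi^k$ with $c_k \in Z_p$ the Teichm\"uller representatives, so that $c_1 = 1$ is forced by $\zeta \equiv 1 + \pi \mod \pi^2$. The task reduces to showing $c_k \equiv 1/k! \mod p$ for $2 \le k \le 8$: since $c_k - 1/k!$ is then divisible by $p$, each error $(c_k - 1/k!)\pi^k$ has $v_\pi \ge (p-1) + k \ge p > 9$ as soon as $p \ge 13$, yielding the claimed expansion modulo $\pi^9$.

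To obtain a recursion for $c_k$, I would expand $\zeta^p = 1$ via the binomial theorem with $u = \zeta - 1$, substitute $p = -\pi^{p-1}$, and use $\binom{p-1}{j-1} \equiv (-1)^{j-1} \mod p$ to write $\binom{p}{j} = -\pi^{p-1}\alpha_j$ with $\alpha_j \equiv (-1)^{j-1}/j \mod p$. Equating the coefficient of $\pi^{p+k}$ in $(1+u)^p - 1$ to zero modulo $p$ for $1 \le k \le 7$, and observing that the coefficient of $\pi^{p+k}$ in $u^p$ is $\equiv 0 \mod p$ in this range (no multinomial term escapes divisibility by $p$ unless it is a pure power $c_j \pi^{jp}$, which does not occur for exponents $p+k$ with $1 \le k \le p-2$), one arrives at the triangular recursion
\[
c_{k+1} \equiv \sum_{j=2}^{k+1} \frac{(-1)^j}{j}\, [\pi^{k+1}]\,u^j \mod p.
\]
The key recognition is that this same recursion is satisfied formally by $c_k = 1/k!$: setting $u = e^{\pi} - 1$ as a formal power series in $\pi$ yields $\log(1+u) = \pi$, which rearranges exactly to the displayed formula. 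Since the recursion triangularly determines $c_{k+1}$ from $c_1, \ldots, c_k$ and all denominators $2, 3, \ldots, 8$ are units modulo $p$, we conclude $c_k \equiv 1/k! \mod p$ for $k \le 8$.

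For part (2), I would apply the Galois automorphism $\sigma_{-1}$ of Lemma \ref{sigma}; since $\omega(-1) = -1$, this sends $\pi \to -\pi$ and $\zeta \to \zeta^{-1}$, flipping the signs of the odd-order terms in (1). Part (3) follows by adding (1) and (2): the odd powers cancel while the even powers double. For part (4), I would write $(\zeta + \zeta^{-1})/2 = 1 + \epsilon$ with $v_\pi(\epsilon) = 2$, so that
\[
1 - \frac{2}{\zeta + \zeta^{-1}} = \frac{\epsilon}{1+\epsilon} = \epsilon - \epsilon^2 + \epsilon^3 - \epsilon^4 + O(\pi^{10}),
\]
since $v_\pi(\epsilon^n) \ge 2n \ge 10$ for $n \ge 5$; the stated expansion follows by direct substitution of the expression for $\epsilon$ from (3) and collection of like powers modulo $\pi^{10}$.

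The main obstacle is the derivation and recognition of the recursion in part (1). The bookkeeping around $\binom{p}{j}$ and the $u^p$ term is the technical heart, and the elegant identification of the solution via the formal identity $\log(e^{\pi}) = \pi$ is what bypasses a brute-force coefficient-by-coefficient verification. Once (1) is secured, parts (2)--(4) amount to routine algebra modulo $\pi^{10}$.
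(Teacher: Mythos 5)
Your proposal is correct. The skeleton of your part (1) coincides with the paper's: both expand $(1+u)^p-1=0$ with $u=\zeta-1=\sum_k c_k\pi^k$, replace $\binom{p}{j}$ by $-\pi^{p-1}\cdot\frac{(-1)^{j-1}}{j}$ up to a negligible error, observe that the $u^p$ term contributes only its leading $\pi^p$ in the relevant range, and arrive at a triangular system determining $c_2,\dots,c_8$. Where you genuinely diverge is in how that system is solved: the paper writes out the seven coefficient equations explicitly as polynomials in $a_2,\dots,a_8$ (its displayed $c_1,\dots,c_7$) and solves them one at a time by brute force, whereas you recognize the whole system as the truncation of the formal identity $\log(1+u)=\pi$ and read off $u=e^{\pi}-1$, i.e.\ $c_k\equiv 1/k!\bmod p$. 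Your route is cleaner, explains \emph{why} the factorials appear, and extends with no extra work to any precision below $\pi^{p-1}$, while the paper's computation is tied to the cutoff $\pi^{9}$. For part (2) you use the automorphism $\sigma_{-1}$ (which the paper itself shows sends $\pi\mapsto-\pi$ and fixes the Teichm\"uller coefficients) in place of the paper's $\zeta\cdot\zeta^{-1}=1$; both are immediate. Parts (3) and (4) are the same routine manipulations in both treatments, and your expansion $\epsilon-\epsilon^2+\epsilon^3-\epsilon^4$ does reproduce the stated coefficients, including $-1385/40320=-277/8064$. The one point worth stating explicitly is the precision jump in (3): the claim is modulo $\pi^{10}$ although (1) and (2) are only given modulo $\pi^{9}$; this is justified because the $\sigma_{-1}$-invariance of $\zeta+\zeta^{-1}$ kills \emph{all} odd-degree terms, so $c_2,c_4,c_6,c_8$ already determine it modulo $\pi^{10}$.
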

\begin{proof}
(1) Let the $\pi$-adic expansion of $\zeta$ be $\zeta=1+\pi+a_2\pi^2+a_3\pi^3+\cdots$, where $a_i^p=a_i$ and $a_i\in Z_p$.
For simplicity, let $a_0=a_1=1$. Then
\begin{align*}
(1+\pi+a_2\pi^2+\cdots)^p-1=&\pi^p(1+a_2\pi+a_3\pi^2+\cdots)^p
+\sum_{i=1}^{p-1}\binom{p}{i}\pi^i(1+a_2\pi+a_3\pi^2+\cdots)^i\\
\equiv &\pi^p-\sum_{i=1}^{8}\pi^{p-1+i}\frac{\binom{p}{i}}{p}(1+a_2\pi+a_3\pi^2+\cdots+a_{9-i}\pi^{8-i})^i   \mod \pi^{p+8},\\
\equiv &-\pi^p(a_2\pi+a_3\pi^2+\cdots+a_8\pi^7)\\
&-\sum_{i=2}^{8}\pi^{p-1+i}\frac{\binom{p}{i}}{p}(1+a_2\pi+a_3\pi^2+\cdots+a_{9-i}\pi^{8-i})^i   \mod \pi^{p+8},\\
\equiv &-\pi^p((a_2\pi+a_3\pi^2+\cdots+a_8\pi^7)\\
&+\sum_{i=2}^{8}\pi^{i-1}\frac{\binom{p}{i}}{p}(1+a_2\pi+a_3\pi^2+\cdots+a_{9-i}\pi^{8-i})^i) \mod \pi^{p+8},
\end{align*}
From $(1+\pi+a_2\pi^2+\cdots)^p-1=0$, we have
\begin{align*}
(a_2\pi+a_3\pi^2+\cdots+a_8\pi^7)+\sum_{i=2}^{8}\pi^{i-1}
\frac{\binom{p}{i}}{p}(1+a_2\pi+a_3\pi^2+\cdots+a_{9-i}\pi^{8-i})^i\equiv 0 \mod \pi^{8},
\end{align*}
Denote $s_1\equiv a_2\pi+a_3\pi^2+\cdots+a_8\pi^7 \mod \pi^8$ and $s_i\equiv \pi^{i-1}\frac{\binom{p}{i}}{p}(1+a_2\pi+a_3\pi^2+\cdots+a_{9-i}\pi^{8-i})^i\mod \pi^8$$(i=2,3\cdots,8)$. From direct computation, we have
\begin{align*}
s_2\equiv &- (a_2a_6 +a_3a_5 +\frac{1}{2}a_4^2 +a_7)\pi^7 - (a_2a_5 +a_3a_4 +a_6)\pi^6 \\
&- (a_2a_4 + \frac{1}{2}a_3^2 + a_5)\pi^5 - (a_2a_3 + a_4)\pi^4 - (\frac{1}{2}a_2^2 + a_3)\pi^3 - a_2\pi^2 - \frac{1}{2}\pi \mod \pi^8.
\end{align*}
Accordingly, $s_3,\cdots,s_8$ can be computed. Then we have
\begin{align*}
\sum_{i=1}^{8}s_i\equiv \sum_{i=1}^{7}c_i\pi^i \mod \pi^8,
\end{align*}
where
\begin{align*}
c_1=&a_2 - 1/2,\\
c_2=&-a_2 + a_3 + 1/3,\\
c_3=& - 1/2a_2^2 +a_2 - a_3 + a_4 - 1/4,\\
c_4=& a_2^2 - a_2a_3 - a_2 + a_3 - a_4 + a_5 + 1/5,\\
c_5=&1/3a_2^3 - 3/2a_2^2 + 2a_2a_3 - a_2a_4 + a_2 -
    1/2a_3^2 - a_3 + a_4 - a_5 + a_6 - 1/6,\\
c_6=&- a_2^3 +a_2^2a_3 + 2a_2^2 - 3a_2a_3 + 2a_2a_4 - a_2a_5 - a_2 + a_3^2 - a_3a_4 + a_3 -a_4 + a_5 - a_6 + a_7 + 1/7,\\
c_7=& - 1/4a_2^4 + 2a_2^3 - 3a_2^2 a_3 + a_2^2 a_4 - 5/2a_2^2 + a_2 a_3^2 + 4a_2 a_3 - 3a_2 a_4 + 2a_2 a_5 - a_2 a_6 \\
&+ a_2 - 3/2 a_3^2 + 2a_3 a_4 - a_3 a_5 - a_3 - 1/2a_4^2 + a_4 - a_5 + a_6 - a_7 + a_8 - 1/8,
\end{align*}
Since $\sum_{i=1}^{7}c_i\pi^i\equiv 0 \mod \pi^8$, we have
\begin{align*}
 &a_2\equiv 1/2\mod p,~a_3\equiv 1/6 \mod p,~a_4\equiv 1/24\mod p,~a_5\equiv 1/120\mod p,\\
 &a_6\equiv 1/720\mod p,~a_7\equiv 1/5040\mod p,~ a_8\equiv 1/40320\mod p.
\end{align*}
Hence Result (1) holds.

(2) From $\zeta^{-1}\zeta=1$ and Result (1), Result (2) can be obviously obtained.

(3) From Result (1) and (2), Proposition \ref{a8}, this result can be obviously obtained.

(4) From Result (3), this result can be obviously obtained.
\end{proof}

\begin{example}
Let $p=37$, $\pi^{36}+37=0$, $\zeta^{p}=1$, and $\zeta\equiv 1+\pi \mod \pi^2$.
From direct computation, $\zeta\equiv 11\pi^8 + 14\pi^7 - 13\pi^6 - 4\pi^5 + 17\pi^4 - 6\pi^3 - 18\pi^2 + \pi + 1 \mod \pi^9$. From Proposition \ref{zetapi},
$\zeta\equiv 1 + \pi + \frac{1}{2}\pi^2 + \frac{1}{6}\pi^3 + \frac{1}{24}\pi^4 + \frac{1}{120}\pi^5 + \frac{1}{720}\pi^6 + \frac{1}{5040}\pi^7 + \frac{1}{40320}\pi^8  \mod \pi^{9}$. Note that $\frac{1}{2}\equiv -18 \mod 37$, $\frac{1}{6}\equiv -6 \mod 37$,$\frac{1}{24}\equiv 17\mod 37$,
$\frac{1}{120}\equiv -4\mod 37$, $\frac{1}{720}\equiv -13\mod 37$, and  $\frac{1}{5040}\equiv 14\mod 37$,$\frac{1}{40320}\equiv 11\mod 37$. The computation result of Proposition \ref{zetapi}
 is just the same as the direct computation.
\end{example}
\begin{corollary}
Let $p$ be a prime greater than $11$, and $\zeta$ be a primitive
$p$-th root of unity. Then $\prod_{i=1}^{\frac{p-1}{2}}(1-\frac{2}{\zeta^i+\zeta^{-i}})\equiv (\frac{-2}{p})p\mod p^2$.
\end{corollary}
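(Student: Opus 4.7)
My plan is to apply the $\pi$-adic expansion of Proposition~\ref{zetapi}(4) together with the Galois action of Lemma~\ref{sigma}. From $1-\frac{2}{\zeta+\zeta^{-1}}\equiv\frac{1}{2}\pi^{2}\mod\pi^{4}$ and $\sigma_i(\pi)=\omega(i)\pi$, I would obtain $1-\frac{2}{\zeta^{i}+\zeta^{-i}}=\sigma_i(1-\frac{2}{\zeta+\zeta^{-1}})\equiv\frac{1}{2}\omega(i)^{2}\pi^{2}\mod\pi^{4}$ for each $1\le i\le\frac{p-1}{2}$. Multiplying these $(p-1)/2$ congruences yields
\begin{align*}
L := \prod_{i=1}^{(p-1)/2}\Bigl(1-\frac{2}{\zeta^{i}+\zeta^{-i}}\Bigr)\equiv \frac{\pi^{p-1}}{2^{(p-1)/2}}\prod_{i=1}^{(p-1)/2}\omega(i)^{2}\mod\pi^{p+1}.
\end{align*}

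The crux is promoting this to a mod-$p^{2}$ statement, for which I would show $L\in Q_p$. Any $\sigma_j$ sends $\zeta^{i}+\zeta^{-i}$ to $\zeta^{ij}+\zeta^{-ij}=\zeta^{k}+\zeta^{-k}$ where $k\in\{1,\ldots,\frac{p-1}{2}\}$ is the representative of $\pm ij\mod p$, and the induced map $i\mapsto k$ is a bijection, so $\sigma_j$ only permutes the factors of $L$. Hence $L$ is Galois-fixed. Since each factor has $\pi$-valuation $2$, $\nu_p(L)=1$, so writing $L=pu$ with $u\in\mathbb{Z}_p$ and using $p=-\pi^{p-1}$ forces the $\pi$-adic expansion of $L$ to be supported only on degrees $p-1, 2(p-1),\ldots$; in particular all coefficients in degrees $p, p+1,\ldots, 2p-3$ vanish. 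Therefore the coefficient $c$ of $\pi^{p-1}$ already determines $L\mod p^{2}$ via $L\equiv c\pi^{p-1}=-cp\mod p^{2}$. From the first paragraph (and $\omega(i)\equiv i\mod p$),
\begin{align*}
c\equiv \frac{1}{2^{(p-1)/2}}\prod_{i=1}^{(p-1)/2}i^{2}\mod p.
\end{align*}

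Two classical identities then finish the computation. Euler's criterion gives $2^{(p-1)/2}\equiv(\frac{2}{p})\mod p$, and Wilson's theorem paired with the symmetry $i\leftrightarrow p-i$ in $(p-1)!$ gives $(((p-1)/2)!)^{2}\equiv(-1)^{(p+1)/2}\mod p$. Substituting,
\begin{align*}
L\equiv -cp \equiv -(\tfrac{2}{p})(-1)^{(p+1)/2}p = (-1)^{(p-1)/2}(\tfrac{2}{p})p = (\tfrac{-1}{p})(\tfrac{2}{p})p = (\tfrac{-2}{p})p\mod p^{2},
\end{align*}
as claimed. The main obstacle I anticipate is the Galois-invariance step of the second paragraph: without recognising $L\in Q_p$, which collapses all of the intermediate $\pi$-adic coefficients between $\pi^{p-1}$ and $\pi^{2(p-1)}$, one would need to extend the expansion of each factor well beyond the four terms provided in Proposition~\ref{zetapi}(4) in order to reach a $\mod p^{2}$ congruence.
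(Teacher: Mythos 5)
Your proof is correct and follows essentially the same route as the paper: expand each factor as $\frac{1}{2}i^2\pi^2 \bmod \pi^4$ via Lemma \ref{sigma} and Proposition \ref{zetapi}, multiply, evaluate $2^{-(p-1)/2}\prod i^2$ by Euler's criterion and Wilson's theorem, and use the fact that the product lies in $Z_p$ to promote the congruence mod $\pi^{p+1}$ to one mod $p^2$. The only difference is that you supply justifications the paper leaves implicit (the Galois-permutation argument for rationality and the valuation argument for the lift), which is a welcome addition rather than a departure.
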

\begin{proof}
Note that $1-\frac{2}{\zeta^i+\zeta^{-i}}=\sigma_i(1-\frac{2}{\zeta+\zeta^{-1}})$.
From Proposition \ref{zetapi} and Lemma \ref{sigma}, we have
$1-\frac{2}{\zeta^i+\zeta^{-i}}\equiv \frac{1}{2}i^2\pi^2\mod \pi^4$. Then
\begin{align*}
\prod_{i=1}^{\frac{p-1}{2}}(1-\frac{2}{\zeta^i+\zeta^{-i}})=
\prod_{i=1}^{\frac{p-1}{2}}(\frac{1}{2}i^2\pi^2)=\frac{1}{2^{\frac{p-1}{2}}}
\pi^{p-1}\prod_{i=1}^{\frac{p-1}{2}}i^2
=(\frac{2}{p})(-(\frac{-1}{p}))\pi^{p-1}
\equiv (\frac{-2}{p})p \mod \pi^{p+1}.
\end{align*}
Note that $\prod_{i=1}^{\frac{p-1}{2}}(1-\frac{2}{\zeta^i+\zeta^{-i}})\in Z_p$. Then $\prod_{i=1}^{\frac{p-1}{2}}(1-\frac{2}{\zeta^i+\zeta^{-i}})\equiv (\frac{-2}{p})p \mod p^2$.
\end{proof}
\subsection{Special values of Kloosterman sums}

\begin{proposition}\label{ai^2}
Let $i$ be an integer such that $1\le i \le \frac{p-1}{2}$, and $a\in \mathbb{F}_{q}$, then $K_q(a)=1-\frac{2}{\zeta^i+\zeta^{-i}}$ if and only if $K_q(\frac{1}{i^2}a)=1-\frac{2}{\zeta+\zeta^{-1}}$.
\end{proposition}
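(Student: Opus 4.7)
The plan is to use the Galois action on the cyclotomic field $Q_p(\zeta)$ to move between the two identities. Specifically, for each $i$ coprime to $p$, let $\sigma_i$ be the Galois automorphism of $Q_p(\zeta,\xi)$ that acts on $Q_p(\zeta)$ by $\zeta\mapsto\zeta^i$ and trivially on the unramified extension $Q_p(\xi)$ (this is the $\sigma_i$ of Lemma \ref{sigma}). Let $j$ be the inverse of $i$ modulo $p$, chosen in $\{1,\ldots,p-1\}$, so that $\sigma_i\sigma_j=\mathrm{id}$.

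The heart of the argument is the identity
\begin{equation*}
\sigma_j(K_q(a)) \;=\; K_q\!\left(\tfrac{a}{i^2}\right).
\end{equation*}
To prove this, I would apply $\sigma_j$ inside the defining sum (\ref{Kq}). Since $\mathrm{Tr}^m_1(x^{q-2}+ax)\in\mathbb{F}_p$ is an ordinary integer mod $p$ and $\sigma_j(\zeta^k)=\zeta^{jk}$, this gives
\begin{equation*}
\sigma_j(K_q(a)) \;=\; \sum_{x\in\mathbb{F}_q}\zeta^{\mathrm{Tr}^m_1(jx^{q-2}+jax)}.
\end{equation*}
Now substitute $x=jy$; since $j\in\mathbb{F}_p^{*}\subset\mathbb{F}_q^{*}$ this is a bijection on $\mathbb{F}_q$. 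Using $j^{q-1}=1$, hence $j^{q-2}=j^{-1}=i$, the exponent becomes $\mathrm{Tr}^m_1(j\cdot i\,y^{q-2}+j^2 a\,y)=\mathrm{Tr}^m_1(y^{q-2}+(a/i^2)y)$, which by (\ref{Kq}) is exactly $K_q(a/i^2)$.

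Next, I would record the easy observation
\begin{equation*}
\sigma_i\!\left(1-\tfrac{2}{\zeta+\zeta^{-1}}\right) \;=\; 1-\tfrac{2}{\zeta^i+\zeta^{-i}},
\end{equation*}
which is immediate from $\sigma_i(\zeta)=\zeta^i$ and $\sigma_i(\zeta^{-1})=\zeta^{-i}$. Combining the two observations yields the proposition: assuming $K_q(a)=1-\tfrac{2}{\zeta^i+\zeta^{-i}}$, apply $\sigma_j$ to both sides to obtain $K_q(a/i^2)=1-\tfrac{2}{\zeta+\zeta^{-1}}$; conversely, assuming $K_q(a/i^2)=1-\tfrac{2}{\zeta+\zeta^{-1}}$, apply $\sigma_i$ and use the analogous identity $\sigma_i(K_q(b))=K_q(i^2 b)$ with $b=a/i^2$ to recover $K_q(a)=1-\tfrac{2}{\zeta^i+\zeta^{-i}}$.

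There is no serious obstacle here. The only point that needs care is justifying that $\sigma_j$ acts on $K_q(a)\in Q_p(\zeta)$ by raising $\zeta$ to the $j$-th power while leaving the $\mathbb{F}_p$-valued trace exponents untouched, and that the substitution $x=jy$ is a well-defined bijection compatible with the Frobenius-like exponent $q-2$. Both are standard, and the computation of $j^{q-2}=j^{-1}=i$ is what produces the factor $j^2=1/i^2$ in front of $a$.
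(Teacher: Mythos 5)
Your proof is correct and follows essentially the same route as the paper: both rest on the identity $\sigma_{1/i}(K_q(a))=K_q(a/i^2)$, obtained by pushing the Galois action into the exponential sum and rescaling the summation variable (the paper substitutes $x\mapsto ix$ inside $K_q(a/i^2)$, you substitute $x=jy$ inside $\sigma_j(K_q(a))$ --- the same computation read in the other direction). The only cosmetic difference is that you spell out both implications, whereas the paper leaves the application of $\sigma_{1/i}$ to $1-\frac{2}{\zeta^i+\zeta^{-i}}$ implicit.
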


\begin{proof}
From the definition of Kloosterman sums, we have
\begin{align*}
K_q(\frac{1}{i^2}a)=&\sum_{x\in \mathbb{F}_{q}}\zeta^{\mathrm{\mathrm{Tr}}^m_1(\frac{1}{i^2}ax+x^{q-2})}
=\sum_{x\in \mathbb{F}_{q}}\zeta^{\mathrm{\mathrm{Tr}}^m_1(\frac{1}{i^2}a(ix)+(ix)^{q-2})}
=\sum_{x\in \mathbb{F}_{q}}\zeta^{\frac{1}{i}\mathrm{\mathrm{Tr}}^m_1(ax+x^{q-2})}
=\sigma_{\frac{1}{i} \mod p}(K_q(a)).
\end{align*}
Hence, the proposition holds.
\end{proof}
\begin{remark}
From the above proposition, to consider the value $1-\frac{2}{\zeta^i+\zeta^{-i}}$  of  Kloosterman sums, we just consider the case for $K_q(a)=1-\frac{2}{\zeta+\zeta^{-1}}$.
Further, denote $N_i=\#\{a\in \mathbb{F}_q:K_q(a)=1-\frac{2}{\zeta^i+\zeta^{-i}}\}$ for $1\le i \le \frac{p-1}{2}$. From the above proposition, $N_1=N_2=\cdots=N_{\frac{p-1}{2}}$. That explains the experiment result in \cite{JZH}
that  $N_i$ are equal.
\end{remark}

\begin{theorem}\label{a^4}
Let $p$ be a prime greater than $13$, and $m\ge 1$. If $K_q(a)=1-\frac{2}{\zeta+\zeta^{-1}}$, then
\begin{align*}
\mathrm{\mathrm{Tr}}^m_1(a)=-\frac{1}{2}  ,~~\mathrm{\mathrm{Tr}}^m_1(a^2)=-\frac{1}{3} ,~~\mathrm{\mathrm{Tr}}^m_1(a^3)= -\frac{1}{5},~~\mathrm{\mathrm{Tr}}^m_1(a^4)= -\frac{136}{1155}.
\end{align*}
Further,
\begin{align*}
\sum_{i<j}a^{p^i+p^j}=\frac{7}{24}, \sum_{i<j<k}a^{p^i+p^j+p^k}=-\frac{41}{240},
\sum_{i<j<k<l}a^{p^i+p^j+p^k+p^l}=&\frac{8879}{88704}.
\end{align*}
\end{theorem}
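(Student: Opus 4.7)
The plan is to match the $\pi$-adic expansion of $K_q(a)$ given in Proposition \ref{a8} against the $\pi$-adic expansion of $1-\frac{2}{\zeta+\zeta^{-1}}$ given in Proposition \ref{zetapi}(4), coefficient by coefficient, up through $\pi^8$. Since both sides lie in $Z_p(\zeta+\zeta^{-1})$ only even powers of $\pi$ appear, so the equality $K_q(a)=1-\frac{2}{\zeta+\zeta^{-1}}$ yields exactly four congruences modulo $p$:
\begin{align*}
a_2 &\equiv \tfrac{1}{2},\quad a_4 \equiv -\tfrac{5}{24},\quad a_6 \equiv \tfrac{61}{720},\quad a_8 \equiv -\tfrac{277}{8064}\pmod{p}.
\end{align*}

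I would then substitute the explicit formulas for $a_2,a_4,a_6,a_8$ from Proposition \ref{a8} and solve the resulting system sequentially. The $a_2$ equation immediately gives $\mathrm{Tr}^m_1(a)=-\tfrac{1}{2}$. Plugging this into the $a_4$ equation produces a linear equation for $\mathrm{Tr}^m_1(a^2)$, which yields $-\tfrac{1}{3}$. Substituting both values into the $a_6$ equation gives $\mathrm{Tr}^m_1(a^3)=-\tfrac{1}{5}$, and finally the $a_8$ equation becomes linear in $\mathrm{Tr}^m_1(a^4)$ alone and evaluates to $-\tfrac{136}{1155}$. The restriction $p\ge 13$ is exactly what is needed for the denominators $2,3,5$ and $1155=3\cdot5\cdot7\cdot11$ and the coefficient $576=2^6\cdot 3^2$ to be invertible modulo $p$ (and hence for the resulting rationals to represent actual elements of $\mathbb{F}_p$).

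For the second block of identities, I would invoke Lemma \ref{Fmsum} parts (1), (3), (7), observing that $\sum a^{p^{i_1}+\cdots+p^{i_t}}$ over \emph{ordered} tuples of distinct indices equals $t!\sum_{i_1<\cdots<i_t}a^{p^{i_1}+\cdots+p^{i_t}}$. Therefore
\begin{align*}
\sum_{i<j}a^{p^i+p^j} &= \tfrac{1}{2}\bigl((\mathrm{Tr}^m_1(a))^2-\mathrm{Tr}^m_1(a^2)\bigr),\\
\sum_{i<j<k}a^{p^i+p^j+p^k} &= \tfrac{1}{6}\bigl((\mathrm{Tr}^m_1(a))^3-3\mathrm{Tr}^m_1(a)\mathrm{Tr}^m_1(a^2)+2\mathrm{Tr}^m_1(a^3)\bigr),
\end{align*}
and analogously for the four-index sum via part (7). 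Substituting the trace values already obtained is a routine arithmetic verification giving the three stated numerical values. Strictly speaking Lemma \ref{Fmsum} is stated in terms of $\omega$ and $\widehat{\mathrm{Tr}}$, but the same proof works with $\omega$ replaced by the identity on $\mathbb{F}_q$ and $\widehat{\mathrm{Tr}}$ by $\mathrm{Tr}$, so the identities transfer verbatim.

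The main obstacle is not conceptual but purely computational: writing out $a_8$ from Proposition \ref{a8} with $\mathrm{Tr}^m_1(a),\mathrm{Tr}^m_1(a^2),\mathrm{Tr}^m_1(a^3)$ already pinned down, and reducing the resulting expression modulo $p$ to a clean linear equation in $\mathrm{Tr}^m_1(a^4)$. One must be careful to track the rational coefficients exactly (especially the $\tfrac{1}{576}$ prefactor and the $-\tfrac{277}{8064}$ target) and then simplify to $-\tfrac{136}{1155}$. Once this verification is done, the statements about the elementary symmetric sums follow by a direct substitution that involves no further new ideas.
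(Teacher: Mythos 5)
Your proposal matches the paper's own proof essentially step for step: equate the even-index $\pi$-adic coefficients $a_2,a_4,a_6,a_8$ from Proposition \ref{a8} with the expansion of $1-\frac{2}{\zeta+\zeta^{-1}}$ in Proposition \ref{zetapi}(4), solve the triangular system sequentially for $\mathrm{Tr}^m_1(a^j)$, and then obtain the symmetric sums from Lemma \ref{Fmsum} with the $t!$ normalization. The approach and all intermediate values are the same as in the paper, so no further comparison is needed.
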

\begin{proof}
From Proposition \ref{a8} and Proposition \ref{zetapi},
\begin{align*}
a_2=&-\mathrm{\mathrm{Tr}}^m_1(a)=\frac{1}{2} \mod p,\\
a_4=&\frac{1}{4}(\mathrm{\mathrm{Tr}}^m_1(a^2)-2(\mathrm{\mathrm{Tr}}^m_1(a))^2)=-\frac{5}{24} \mod p,\\
a_6=&-\frac{1}{36}(4\mathrm{\mathrm{Tr}}^m_1(a^3)+6(\mathrm{\mathrm{Tr}}^m_1(a))^3-9(\mathrm{\mathrm{Tr}}^m_1(a))(\mathrm{\mathrm{Tr}}^m_1(a^2)))=\frac{61}{720} \mod p;\\
a_8=&-\frac{1}{576}(24(\mathrm{\mathrm{Tr}}^m_1(a))^4-72(\mathrm{\mathrm{Tr}}^m_1(a))^2(\mathrm{\mathrm{Tr}}^m_1(a^2))+64(\mathrm{\mathrm{Tr}}^m_1(a))(\mathrm{\mathrm{Tr}}^m_1(a^3))\\
&+18(\mathrm{\mathrm{Tr}}^m_1(a^2))^2-33\mathrm{\mathrm{Tr}}^m_1(a^4))=-\frac{277}{8064} \mod p.
\end{align*}
Since $p\ge 13$, we have
\begin{align*}
\mathrm{\mathrm{Tr}}^m_1(a)=-\frac{1}{2}  ,~~\mathrm{\mathrm{Tr}}^m_1(a^2)=-\frac{1}{3} ,~~\mathrm{\mathrm{Tr}}^m_1(a^3)= -\frac{1}{5},~~\mathrm{\mathrm{Tr}}^m_1(a^4)= -\frac{136}{1155}.
\end{align*}
From Lemma \ref{Fmsum}, we have
\begin{align*}
\sum_{i<j} a^{p^i+p^j}=&\frac{1}{2}((\mathrm{\mathrm{Tr}}^m_1(a))^2-\mathrm{\mathrm{Tr}}^m_1(a^2)),\\
\sum_{i<j<k} a^{p^i+p^j+p^k}=&\frac{1}{6}((\mathrm{\mathrm{Tr}}^m_1(a))^3-3(\mathrm{\mathrm{Tr}}^m_1(a))(\mathrm{\mathrm{Tr}}^m_1(a^2))+2\mathrm{\mathrm{Tr}}^m_1(a^3)),\\
\sum_{i<j<k<l} a^{p^i+p^j+p^k+p^l}=&\frac{1}{24}((\mathrm{\mathrm{Tr}}^m_1(a))^4-6(\mathrm{\mathrm{Tr}}^m_1(a))^2
(\mathrm{\mathrm{Tr}}^m_1(a^2))+3(\mathrm{\mathrm{Tr}}^m_1(a^2))^2+8(\mathrm{\mathrm{Tr}}^m_1(a))(\mathrm{\mathrm{Tr}}^m_1(a^3))-6\mathrm{\mathrm{Tr}}^m_1(a^4)).
\end{align*}
Hence,
\begin{align*}
\sum_{i<j}a^{p^i+p^j}=\frac{7}{24},
\sum_{i<j<k}a^{p^i+p^j+p^k}=-\frac{41}{240},
\sum_{i<j<k<l}a^{p^i+p^j+p^k+p^l}=\frac{8879}{88704}.
\end{align*}
\end{proof}
\begin{remark}
When $p=7,11$, the following results also hold.
\begin{align*}
\mathrm{\mathrm{Tr}}^m_1(a)=-\frac{1}{2}  ,~~\mathrm{\mathrm{Tr}}^m_1(a^2)=-\frac{1}{3} ,~~\mathrm{\mathrm{Tr}}^m_1(a^3)= -\frac{1}{5},\\
\sum_{i<j}a^{p^i+p^j}=\frac{7}{24},
\sum_{i<j<k}a^{p^i+p^j+p^k}=-\frac{41}{240}.
\end{align*}
\end{remark}
\begin{theorem}\label{1}
Let $p\ge 7$, and $i$ be an integer. If $a\in \mathbb{F}_p$, then $K_q(a)\neq 1-\frac{2}{\zeta^i+\zeta^{-i}}$.
\end{theorem}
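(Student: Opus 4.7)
The plan is to reduce to the case $i = 1$ using Proposition \ref{ai^2}, then exploit the trace identities in Theorem \ref{a^4} together with the fact that $\mathrm{Tr}^m_1(a^k) = m \cdot a^k$ when $a \in \mathbb{F}_p$ (viewing $m$ as a residue modulo $p$), and derive a contradiction by showing that the resulting system of equations for $(a, m)$ forces $2 \equiv 0 \pmod p$.

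First, I would note that Proposition \ref{ai^2} shows $K_q(a) = 1 - \frac{2}{\zeta^i + \zeta^{-i}}$ is equivalent to $K_q(a/i^2) = 1 - \frac{2}{\zeta + \zeta^{-1}}$, and since $\mathbb{F}_p$ is closed under multiplication by elements of $\mathbb{F}_p^*$, it suffices to rule out $K_q(a) = 1 - \frac{2}{\zeta + \zeta^{-1}}$ for $a \in \mathbb{F}_p$. Assume for contradiction that this equality holds. For $p \ge 13$, Theorem \ref{a^4} forces $\mathrm{Tr}^m_1(a) = -\tfrac{1}{2}$, $\mathrm{Tr}^m_1(a^2) = -\tfrac{1}{3}$, $\mathrm{Tr}^m_1(a^3) = -\tfrac{1}{5}$ in $\mathbb{F}_p$; for $p = 7, 11$ the Remark following Theorem \ref{a^4} supplies exactly these three trace identities, so the argument below is uniform for all $p \ge 7$.

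Since $a \in \mathbb{F}_p$, we have $\mathrm{Tr}^m_1(a^k) = m \cdot a^k$ in $\mathbb{F}_p$. Substituting $k = 1$ yields $ma = -\tfrac{1}{2}$, so $a = -\tfrac{1}{2m}$ (in particular $m \not\equiv 0 \pmod p$). Substituting this into $ma^2 = -\tfrac{1}{3}$ gives $\tfrac{1}{4m} = -\tfrac{1}{3}$, i.e.\ $4m \equiv -3 \pmod p$. Substituting into $ma^3 = -\tfrac{1}{5}$ gives $-\tfrac{1}{8m^2} = -\tfrac{1}{5}$, i.e.\ $8m^2 \equiv 5 \pmod p$. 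Combining these two: from $4m \equiv -3$ we get $16m^2 \equiv 9$, so $2(8m^2) \equiv 9$, hence $2 \cdot 5 \equiv 9 \pmod p$, which yields $1 \equiv 0 \pmod p$, a contradiction.

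The argument is essentially algebraic manipulation once the trace conditions are in hand, so the only real subtlety is making sure the three trace identities (through $\mathrm{Tr}^m_1(a^3)$) are available for \emph{all} primes $p \ge 7$ and not only $p \ge 13$; the Remark after Theorem \ref{a^4} is exactly what covers $p = 7, 11$. No further case analysis is required, and the theorem follows.
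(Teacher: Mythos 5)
Your core argument coincides with the paper's own proof: the same reduction to the case $i=1$ via Proposition~\ref{ai^2}, the same three trace identities $ma=-\tfrac12$, $ma^2=-\tfrac13$, $ma^3=-\tfrac15$ from Theorem~\ref{a^4}, and the same terminal contradiction --- your chain $4m\equiv -3$, $8m^2\equiv 5$, hence $10\equiv 9\pmod p$ is algebraically identical to the paper's $\tfrac{2}{3}\equiv\tfrac{3}{5}\pmod p$, i.e.\ $\tfrac{1}{15}\equiv 0$. You are also right, and in fact more careful than the paper's own wording, to point out that for $p=7,11$ the three identities come from the remark following Theorem~\ref{a^4} rather than from the theorem itself (which is stated only for larger $p$).

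The one genuine omission is the case $i\equiv 0\pmod p$ (for instance $i=0$), which the hypothesis ``$i$ an integer'' permits. There $\zeta^i+\zeta^{-i}=2$, the target value is $1-\tfrac{2}{2}=0$, and Proposition~\ref{ai^2} cannot be invoked because it would require replacing $a$ by $a/i^2$ with $i^2\equiv 0$. That residual case is precisely the assertion that $K_q(a)$ has no zeros for $p\ge 7$, which your trace computation does not address; the paper disposes of it by citing Kononen et al.\ \cite{KRV}. Adding that single citation closes the gap; the rest of your write-up is correct.
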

\begin{proof}
We just need to consider the case of $0\le i\le \frac{p-1}{2}$. If $i=0$,
this theorem holds \cite{KRV}.
From Proposition \ref{ai^2}, we just need to prove that if $a\in \mathbb{F}_p$, $K_q(a)\neq 1-\frac{2}{\zeta+\zeta^{-1}}$. Suppose that $K_q(a)= 1-\frac{2}{\zeta+\zeta^{-1}}$.
From Theorem \ref{a^4}, we have
\begin{align*}
\mathrm{\mathrm{Tr}}^m_1(a)=ma=-\frac{1}{2}  ,~~\mathrm{\mathrm{Tr}}^m_1(a^2)=ma^2=-\frac{1}{3} ,~~\mathrm{\mathrm{Tr}}^m_1(a^3)=ma^3= -\frac{1}{5}.
\end{align*}
Then
\begin{align*}
a=\frac{-\frac{1}{3}}{-\frac{1}{2}}\equiv \frac{-\frac{1}{5}}{-\frac{1}{3}} \mod p,
\end{align*}
i,e, $\frac{2}{3}\equiv \frac{3}{5} \mod p$ or $\frac{1}{15}\equiv 0 \mod p$, which is impossible. Hence, this theorem holds.
\end{proof}

\begin{theorem}\label{11}
If $p=11$, then  for any $a\in \mathbb{F}_{p^m}$ and any integer $i$, $K_{p^m}(a)\neq 1-\frac{2}{\zeta^i+\zeta^{-i}}.$
\end{theorem}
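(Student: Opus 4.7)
The plan is to reduce to a single special value and then derive a contradiction by computing the $\pi^8$-coefficient $a_8$ in the $\pi$-adic expansion of $K_q(a)$ in two independent ways. By Proposition~\ref{ai^2}, the statement for general $i$ reduces after replacing $a$ by $a/i^2$ to the case $i=1$, so it is enough to show $K_q(a)\ne 1-\tfrac{2}{\zeta+\zeta^{-1}}$ for every $a\in \mathbb{F}_{11^m}$.

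The decisive arithmetic fact at $p=11$ is that $33=3\cdot 11\equiv 0\pmod{11}$. In the formula of Proposition~\ref{a8}(2),
\begin{align*}
a_8\equiv -\frac{1}{576}\bigl(24 p_1^4-72 p_1^2 p_2+64 p_1 p_3+18 p_2^2-33 p_4\bigr)\pmod{p}
\end{align*}
with $p_k=\mathrm{Tr}^m_1(a^k)$, the only occurrence of $p_4$ carries coefficient $-33/576$, which vanishes modulo $11$. So although the remark following Theorem~\ref{a^4} only supplies $p_1,p_2,p_3$ for $p=11$ (the $p_4$-value requires $p\ge 13$), these three values still determine $a_8\bmod 11$ completely. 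Before using this I would verify that Propositions~\ref{a8}(2) and \ref{zetapi}(4) both extend from $p\ge 13$ to $p=11$: the proof of the former truncates $K_q(a)$ modulo $\pi^{10}$ using only weight-$\le 4$ Gauss sums, and for $p=11$ the weight-$\ge 5$ contributions have $\pi$-valuation $\ge 10$ (hence vanish mod $\pi^{10}$) while the Stickelberger remainder at weight $4$ appears only at $\pi^{18}$; the latter relies only on denominators dividing $8!$, which is coprime to $11$.

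Assuming toward contradiction that $K_q(a)=1-\tfrac{2}{\zeta+\zeta^{-1}}$, the remark after Theorem~\ref{a^4} gives $p_1\equiv 5,\,p_2\equiv 7,\,p_3\equiv 2\pmod{11}$ (from $-\tfrac12,-\tfrac13,-\tfrac15$). Substituting into the reduced $a_8$-formula yields $a_8\equiv 4\pmod{11}$. On the other hand, Proposition~\ref{zetapi}(4), now extended to $p=11$, gives $a_8\equiv -\tfrac{277}{8064}\equiv 9\pmod{11}$ (using $277\equiv 2$ and $8064\equiv 1$). Since $4\ne 9$ in $\mathbb{F}_{11}$, the contradiction is established, and the theorem follows.

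The main technical obstacle is the careful justification that Propositions~\ref{a8}(2) and \ref{zetapi}(4) indeed persist at $p=11$ despite being stated for $p\ge 13$. Once those boundary-case extensions are settled, the entire argument becomes a short numerical check in $\mathbb{F}_{11}$, hinging entirely on the single divisibility $11\mid 33$ that causes the $p_4$-dependence in the $a_8$-formula to drop out.
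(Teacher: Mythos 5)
Your proposal follows essentially the same route as the paper: reduce to $i=1$ via Proposition~\ref{ai^2}, use the trace values $-\tfrac12,-\tfrac13,-\tfrac15$ from the remark after Theorem~\ref{a^4}, exploit $33\equiv 0\pmod{11}$ to eliminate the $\mathrm{Tr}^m_1(a^4)$ term in the $a_8$-formula, and derive the contradiction $4\not\equiv -2\pmod{11}$ between Proposition~\ref{a8}(2) and Proposition~\ref{zetapi}(4). Your explicit check that both propositions (stated in the paper only for $p>11$) remain valid at $p=11$ is a point the paper's own proof silently assumes, so your version is if anything slightly more complete.
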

\begin{proof}
We just need to consider the case for $0\le i\le \frac{p-1}{2}$. If $i=0$,
this theorem holds \cite{KRV}. From Proposition \ref{ai^2}, we just need to prove
that if $a\in \mathbb{F}_q$ and $q=p^m$, $K_q(a)\neq 1-\frac{2}{\zeta+\zeta^{-1}}$.
From the remark after Theorem \ref{a^4}, we have
\begin{align*}
\mathrm{\mathrm{Tr}}^m_1(a)=-\frac{1}{2}  ,~~\mathrm{\mathrm{Tr}}^m_1(a^2)=-\frac{1}{3} ,~~\mathrm{\mathrm{Tr}}^m_1(a^3)= -\frac{1}{5}.
\end{align*}
Note that $33\mathrm{\mathrm{Tr}}^m_1(a)=0 \mod 11$. From Result (2) in Proposition \ref{a8}, $a_8=4\mod 11$. From Proposition \ref{zetapi}, $a_8=-2\mod 11$, which makes a contradiction. Hence, this theorem holds.
\end{proof}
Let $p\ge 13$, $s\in \{2,3,4\}$, $s|m$, $p\nmid m$, $a\in \mathbb{F}_{p^s}$, and the degree of the minimal polynomial of $a$ be $s$.
Denote $c_1=\sum_{0\le i\le s-1}a^{p^i}$, $c_2=\sum_{0\le i<j\le s-1}a^{p^i+p^j}$, $c_3=\sum_{0\le i<j<k\le s-1}a^{p^i+p^j+p^k}$, and
$c_4=\sum_{0\le i<j<k<l\le s-1}a^{p^i+p^j+p^k+p^l}$. Then the minimal polynomial of $a$ is of the form $m_a(x)=x^s-c_1 s+\cdots+(-1)^sc_s$. If $i>s$, $c_i=0$.  The coefficients
$c_i$ can be computed by $\mathrm{\mathrm{Tr}}^s(a^j)$ and the proof is similar to Lemma
\ref{Fmsum}.

\begin{theorem}\label{2}
Let $i$ be an integer, $p\ge 13$, $2\mid m$, and $a\in \mathbb{F}_{p^2}$. Then $K_q(a)\neq1-\frac{2}{\zeta^i+\zeta^{-i}}$.
\end{theorem}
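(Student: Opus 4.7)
The plan is to set up an overdetermined polynomial system in $\mathbb{F}_p$ arising from Theorem \ref{a^4} combined with the $\mathbb{F}_{p^2}$-structure of $a$, then force a contradiction by eliminating variables.

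First, by Proposition \ref{ai^2}, I reduce to the case $i = 1$; the case $i = 0$ is the Kononen--Helleseth--Kholosha nonvanishing theorem. Suppose for contradiction that $K_q(a) = 1 - \frac{2}{\zeta+\zeta^{-1}}$. Theorem \ref{a^4} then forces $s_j := \mathrm{Tr}_1^m(a^j)$ for $j = 1,2,3,4$ to be the explicit rationals $-\frac{1}{2}$, $-\frac{1}{3}$, $-\frac{1}{5}$, $-\frac{136}{1155}$. Writing $m = 2M$ and $b := a^p \in \mathbb{F}_{p^2}$, the trace decomposes as $s_j = M(a^j + b^j)$, forcing $M \not\equiv 0 \pmod p$ (otherwise $s_1 \equiv 0 \ne -\frac{1}{2}$). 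Setting $t := a+b$ and $n := ab$, both lie in $\mathbb{F}_p$, and the Newton recurrence $(a^j+b^j) = t(a^{j-1}+b^{j-1}) - n(a^{j-2}+b^{j-2})$ expresses each power sum as a polynomial in $t, n$.

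Next, I solve the $j = 1, 2$ equations to obtain $t = -1/(2M)$ and $n = (3+4M)/(24M^2)$ in $\mathbb{F}_p$. Substituting into the $j = 3$ equation yields a quadratic relation $Q_1(M) \equiv 0 \pmod p$, and substituting into the $j = 4$ equation---most cleanly via the recurrence $s_4 = t s_3 - n s_2$ rather than the explicit quartic $t^4 - 4nt^2 + 2n^2$---yields a second quadratic $Q_2(M) \equiv 0 \pmod p$. These two polynomials $Q_1, Q_2 \in \mathbb{Z}[M]$ share a common root modulo $p$ if and only if their resultant $\mathrm{Res}(Q_1, Q_2) \in \mathbb{Z}$ vanishes modulo $p$. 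Computing the resultant and factoring it into primes, one should find only small prime divisors, so for every $p \ge 13$ the resultant is nonzero modulo $p$ and no common $M$ exists, contradicting the existence of $a$.

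The main obstacle is the careful arithmetic of the elimination and the prime factorization of $\mathrm{Res}(Q_1, Q_2)$: one must confirm that no prime $\ge 13$ divides the resultant. Since Proposition \ref{a8} yields only the four $\pi$-adic coefficients $a_2, a_4, a_6, a_8$, the argument uses just four constraints, and it is conceivable that a single exceptional prime $\ge 13$ survives the elimination. In that case one would rule it out by extending the $\pi$-adic analysis to $a_{10}$ and the corresponding expansion of $1 - 2/(\zeta+\zeta^{-1})$ modulo $\pi^{11}$, in direct analogy with the $a_8$-based trick that closes $p = 11$ in the proof of Theorem \ref{11}. Throughout, one must also verify that every denominator introduced (powers of $M$ as well as the integer constants appearing in the elimination) remains invertible modulo $p$ when $p \ge 13$.
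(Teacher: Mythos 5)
Your elimination strategy is essentially the paper's own: your $t,n$ are the paper's $c_1,c_2$, your parameter $M=m/2$ is the reciprocal of the paper's $r\equiv 2/m$, and your two conditions $Q_1(M)=0$, $Q_2(M)=0$ are, after clearing denominators, exactly the paper's conditions $c_3=c_4=0$. The genuine gap is in the step you flagged and then waved past: the resultant does \emph{not} have only small prime divisors. Carrying your own computation through, $Q_1(M)=16M^2+20M+5$ and (after removing a factor of $15$) $Q_2(M)=3264M^2+4312M+1155$, and
$$\mathrm{Res}(Q_1,Q_2)=2160^2-3712\cdot 1540=-1050880=-2^8\cdot 5\cdot 821 .$$
Since the leading coefficient $16$ of $Q_1$ never vanishes for $p\ge 13$, a common root of $Q_1,Q_2$ modulo $p$ exists precisely when $p\mid 2^8\cdot5\cdot 821$, i.e.\ for $p=821$. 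So your main argument proves nothing for $p=821$, and your proposed fallback (pushing the $\pi$-adic expansion to $a_{10}$) is not carried out, is substantially heavier than anything in the paper, and is not guaranteed in advance to produce a constraint that is independent modulo $821$ of the four you already used.

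The paper closes $p=821$ by a much cheaper observation that you are missing. For that prime the elimination forces $r=-108/77$, hence $c_1=t=54/77\equiv 86$ and $c_2=n=72/5929\equiv 659$ in $\mathbb{F}_{821}$, and the polynomial $x^2-c_1x+c_2$ factors as $(x-300)(x-607)$ over $\mathbb{F}_{821}$. Thus $a$ would lie in $\mathbb{F}_{821}$ itself, and Theorem \ref{1} (the $a\in\mathbb{F}_p$ case, which you must in any event invoke separately at the outset, since your $t,n$ parametrization does not by itself exclude $a\in\mathbb{F}_p$) already rules this out. Replace your speculative $a_{10}$ step with this irreducibility check and your argument becomes the paper's proof.
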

\begin{proof}
We just need to prove that $K_q(a)\neq1-\frac{2}{\zeta+\zeta^{-1}}$.
If $a\in \mathbb{F}_p$, this theorem holds. Suppose that
$a\in \mathbb{F}_{p^2}\backslash \mathbb{F}_{p}$ and $K_q(a)=1-\frac{2}{\zeta+\zeta^{-1}}$. From Theorem \ref{a^4},
\begin{align*}
\mathrm{\mathrm{Tr}}^m_1(a)&=\frac{m}{2}\mathrm{\mathrm{Tr}}^2_1(a)=-\frac{1}{2},\mathrm{\mathrm{Tr}}^m_1(a^2)=
\frac{m}{2}\mathrm{\mathrm{Tr}}^2_1(a^2)=-\frac{1}{3},\\
\mathrm{\mathrm{Tr}}^m_1(a^3)&=\frac{m}{2}\mathrm{\mathrm{Tr}}^2_1(a^3)=-\frac{1}{5},
\mathrm{\mathrm{Tr}}^m_1(a^4)=\frac{m}{2}\mathrm{\mathrm{Tr}}^2_1(a^4)=-\frac{136}{1155}.
\end{align*}
If $p| m$, $\frac{1}{2}\equiv 0 \mod p$, which is impossible. Hence, this theorem holds.

Let $p\nmid m$. We have
\begin{align*}
\mathrm{\mathrm{Tr}}^2_1(a)&=-\frac{1}{2}r,\mathrm{\mathrm{Tr}}^2_1(a^2)=-\frac{1}{3}r,\\
\mathrm{\mathrm{Tr}}^2_1(a^3)&=-\frac{1}{5}r,\mathrm{\mathrm{Tr}}^2_1(a^4)=-\frac{136}{1155}r,
\end{align*}
where $r\equiv \frac{2}{m}\mod p$. We can compute
\begin{align*}
c_1=&-1/2r,c_2=1/8r^2 + 1/6r,\\
c_3=&-1/48r^3 - 1/12r^2 - 1/15r,c_4=1/384r^4 + 1/48r^3 + 17/360r^2 + 34/1155r,
\end{align*}
Note that $c_3=c_4=0$ and $r\neq 0 \mod p$, then $r=-108/77$. Hence,
\begin{align*}
c_1=54/77,c_2=72/5929,c_3=-29556/2282665,c_4=147780/35153041.
\end{align*}
Since $c_3=c_4\equiv 0\mod p$, $p|\mathrm{gcd}(29556,147780)=2^2\cdot3^2\cdot821$.
Hence, if $p\ge13$ and $p\neq 821$, this theorem holds.

If $p=821$, then  $c_1=86$, $c_2=659$, and $m_a(x)=x^2-c_1x+c_2=(x-300)(x-607)$,
which contradicts that $m_a(x)$  is the minimal polynomial of $a$.

Hence, this theorem holds.
\end{proof}

\section{Nonexistence of some binomial regular bent functions}

\begin{theorem}
Let $p\ge 7$, $n=2m$. Let  $a\in \mathbb{F}_{p^n}$, and  $a^{p^m+1},b\in \mathbb{F}_p$. Let
$t$ be an integer  satisfying $\mathrm{gcd}(t,p^m+1)=1$. Then the following $p-$ary function
$$f_{a,b,t}(x)=\mathrm{\mathrm{Tr}}^n_1(ax^{t(p^m-1)})+bx^{\frac{p^n-1}{2}}$$
can not be a regular  bent function.
\end{theorem}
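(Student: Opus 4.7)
The plan is to combine Theorem \ref{Zh} with Theorem \ref{1}, reducing the bent property to a Kloosterman sum equality that has already been ruled out.

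First, I would note that the hypotheses of Theorem \ref{Zh} are met: we have $n=2m$, $q=p^m$, $\gcd(t,p^m+1)=1$, $a\in \mathbb{F}_{p^n}=\mathbb{F}_{q^2}$, and $b\in \mathbb{F}_p$. Theorem \ref{Zh} therefore says that $f_{a,b,t}$ is regular bent if and only if
\[
K_q(a^{p^m+1}) \;=\; 1 - \frac{2}{\zeta^{b}+\zeta^{-b}}.
\]
Hence proving that $f_{a,b,t}$ is not regular bent is equivalent to showing this Kloosterman equality fails.

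Second, I would use the additional hypothesis $a^{p^m+1} \in \mathbb{F}_p$, which places the argument of the Kloosterman sum in the prime subfield. This is exactly the setting of Theorem \ref{1}: for $p\ge 7$ and any integer $i$ (here $i=b$, viewed as an integer representative in $\{0,1,\dots,p-1\}$), and any $c \in \mathbb{F}_p$, one has $K_q(c) \neq 1 - \frac{2}{\zeta^{i}+\zeta^{-i}}$. Applied with $c=a^{p^m+1}$ and $i=b$, this directly contradicts the bent criterion above, so $f_{a,b,t}$ cannot be a regular bent function.

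There is essentially no obstacle in this argument; all the work has been packaged into the preceding results. The only point deserving a brief comment is the case $b=0$, where $1-\frac{2}{\zeta^{0}+\zeta^{0}}=0$ and the conclusion reduces to the classical fact, cited in \cite{KRV}, that $K_q(c)\ne 0$ for $p\ge 5$; but this case is already subsumed by Theorem \ref{1}. Thus the full proof is essentially a two-line citation chain: apply Theorem \ref{Zh}, then Theorem \ref{1}.
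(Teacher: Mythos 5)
Your proposal is correct and follows exactly the paper's own argument: the paper's proof is a one-line citation of Theorem \ref{Zh} combined with Theorem \ref{1}, which is precisely the chain you describe (including the observation that the $b=0$ case is already covered). Nothing further is needed.
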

\begin{proof}
From Theorem \ref{Zh} and Theorem \ref{1}, this theorem can be obviously obtained.
\end{proof}
\begin{theorem}
Let $p\ge 13$, $n=2m$, $2|m$. Let  $a\in \mathbb{F}_{p^n}$, $a^{p^m+1}\in \mathbb{F}_{p^2}$, and $b\in \mathbb{F}_{p}$. Let $t$ be an integer
 satisfying $\mathrm{gcd}(t,p^m+1)=1$.
Then the $p-$ary function
$$f_{a,b,t}(x)=\mathrm{\mathrm{Tr}}^n_1(ax^{t(p^m-1)})+bx^{\frac{p^n-1}{2}}$$
can not be a regular bent  function.
\end{theorem}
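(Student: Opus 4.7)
The plan is a clean two-step reduction: translate the regular-bentness of $f_{a,b,t}$ into an equation on a Kloosterman sum via Theorem \ref{Zh}, then rule out that equation using Theorem \ref{2}. First, set $q=p^m$. The hypotheses of the current statement ($a\in\mathbb{F}_{p^n}=\mathbb{F}_{q^2}$, $b\in\mathbb{F}_p$, $\gcd(t,q+1)=1$) match exactly the hypotheses of Theorem \ref{Zh}, so that theorem gives the equivalence
\[
f_{a,b,t}\text{ is regular bent}\iff K_q(a^{q+1})=1-\frac{2}{\zeta^{b}+\zeta^{-b}}.
\]

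Second, by assumption $a^{q+1}=a^{p^m+1}\in\mathbb{F}_{p^2}$, while $p\ge 13$ and $2\mid m$ are also given. These are precisely the hypotheses needed to invoke Theorem \ref{2} on the element $a^{q+1}$ with integer index $i=b$, which yields
\[
K_q(a^{q+1})\neq 1-\frac{2}{\zeta^{b}+\zeta^{-b}}.
\]
Combining the two displays forces the characterizing equation of Theorem \ref{Zh} to fail, so $f_{a,b,t}$ cannot be a regular bent function, which is what we wanted.

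There is essentially no obstacle here, because all of the substantive analytic work has been pushed into the earlier results: the $\pi$-adic expansion of $K_q(a)$ through degree $8$ (Proposition \ref{a8}), the forced values of $\mathrm{Tr}^m_1(a^k)$ for $k\le 4$ (Theorem \ref{a^4}), and the delicate elimination of the intermediate parameter $r\equiv 2/m \pmod p$ together with the exceptional prime $p=821$ (Theorem \ref{2}). The only point worth flagging is the edge case $b\equiv 0\pmod p$: here $1-\frac{2}{\zeta^{b}+\zeta^{-b}}=0$, and the required nonvanishing $K_q(a^{q+1})\ne 0$ is either covered directly by Theorem \ref{2} or, alternatively, by the classical Kononen--Rinta-aho--V\"a\"an\"anen result cited in the introduction. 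Either way, the proof reduces to a one-line concatenation of Theorem \ref{Zh} and Theorem \ref{2}.
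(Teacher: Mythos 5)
Your proof is correct and is exactly the paper's argument: the paper's own proof is the one-line concatenation of Theorem \ref{Zh} (regular bentness $\iff K_q(a^{q+1})=1-\frac{2}{\zeta^b+\zeta^{-b}}$) with Theorem \ref{2} (that value is never attained for $a^{q+1}\in\mathbb{F}_{p^2}$, $p\ge 13$, $2\mid m$). Your remark on the $b\equiv 0$ case is a reasonable extra check, already absorbed into the statement of Theorem \ref{2} since it covers every integer $i$.
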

\begin{proof}
From Theorem \ref{Zh} and Theorem \ref{2}, this theorem can be obviously obtained.
\end{proof}
\begin{theorem}
Let $n=2m$, $2|m$. Let $a\in \mathbb{F}_{11^n}$, and $b\in \mathbb{F}_{11}$.
Let $t$  be an integer satisfying $\mathrm{gcd}(t,11^m+1)=1$. Then the $p-$ary function
$$f_{a,b,t}(x)=\mathrm{\mathrm{Tr}}^n_1(ax^{t(11^m-1)})+bx^{\frac{11^n-1}{2}}$$
can not be a regular bent function.
\end{theorem}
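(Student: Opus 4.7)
The plan is to combine the characterization of regular bent functions in the Jia--Zheng framework (Theorem \ref{Zh}) with the non-attainment result for $p=11$ (Theorem \ref{11}). Specifically, assume for contradiction that $f_{a,b,t}$ is a regular bent function. Since $p=11$, $q=11^m$, $\gcd(t,11^m+1)=1$, $a\in\mathbb{F}_{q^2}$, and $b\in\mathbb{F}_{11}$, all hypotheses of Theorem \ref{Zh} are satisfied. Applying it produces the Kloosterman identity
\begin{equation*}
K_{11^m}\!\left(a^{11^m+1}\right)\;=\;1-\frac{2}{\zeta^{b}+\zeta^{-b}}.
\end{equation*}

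The next step is to observe that the argument of $K_{11^m}$ lies in the correct field. Since $a\in\mathbb{F}_{11^{2m}}$, raising to the $(11^m+1)$-st power gives an element fixed by the Frobenius $x\mapsto x^{11^m}$, so $a^{11^m+1}\in\mathbb{F}_{11^m}$. Thus the left-hand side is a Kloosterman sum of an element of $\mathbb{F}_{11^m}$ itself.

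Now I would invoke Theorem \ref{11}, which asserts that for $p=11$, no element of $\mathbb{F}_{11^m}$ can have Kloosterman sum equal to $1-\tfrac{2}{\zeta^i+\zeta^{-i}}$ for any integer $i$ (including $i=0$, where the value is $0$ and the statement reduces to the nonvanishing result of \cite{KRV}). This directly contradicts the identity obtained in the previous step, so no such regular bent function can exist.

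The argument is essentially a one-line reduction, and there is no substantive obstacle to overcome: the hard work has already been absorbed into Theorem \ref{11}, whose proof relies on the $\pi$-adic expansion coefficients $a_2,\ldots,a_8$ computed in Proposition \ref{a8} together with the incompatibility of $a_8\equiv 4\pmod{11}$ (forced by the trace constraints) with $a_8\equiv -2\pmod{11}$ (read off from Proposition \ref{zetapi}). The only minor point to verify carefully in the final write-up is that the hypothesis $2\mid m$ stated in the theorem is in fact not used in the argument, so the conclusion actually holds without it; the condition may simply be retained for uniformity with the preceding theorem in this section.
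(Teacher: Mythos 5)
Your proposal is correct and follows exactly the paper's own (one-line) argument: apply Theorem \ref{Zh} to convert regular bentness into the identity $K_{11^m}(a^{11^m+1})=1-\frac{2}{\zeta^b+\zeta^{-b}}$ and then contradict it with Theorem \ref{11}; your added observations that $a^{11^m+1}\in\mathbb{F}_{11^m}$ and that the hypothesis $2\mid m$ is never used are both accurate.
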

\begin{proof}
From Theorem \ref{Zh} and Theorem \ref{11}, this theorem can be obviously obtained.
\end{proof}
\section{Conclusion}
This paper discusses the special value $1-\frac{2}{\zeta+\zeta^{-1}}$ of Kloosterman
sum $K_q(a)$($q=p^m$), and presents necessary conditions for $a$ such that $K_q(a)=1-\frac{2}{\zeta+\zeta^{-1}}$. We prove that for $p=11$, there does not
exist $a$ satisfying $K_q(a)=1-\frac{2}{\zeta+\zeta^{-1}}$. For general
$p\ge 13$, we prove that if $a\in \mathbb{F}_{p^s}$ and $s=\mathrm{gcd}(2,m)$,
there does not exist $a$ satisfying $K_q(a)\neq 1-\frac{2}{\zeta+\zeta^{-1}}$.
From results in this paper, it seems that for $p\ge 7$ there does not exist
$a$ satisfying $K_q(a)=1-\frac{2}{\zeta+\zeta^{-1}}$. Our further work will consider
the generalization of our techniques to general cases.

\section*{Acknowledgment}
This work was supported by
the Natural Science Foundation of China
(Grant No.10990011 \& No. 61272499). Yanfeng Qi also acknowledges support from Aisino Corporation Inc.


\ifCLASSOPTIONcaptionsoff
  \newpage
\fi

\end{document}